\documentclass[journal]{IEEEtran}

\ifCLASSINFOpdf
\usepackage[pdftex]{graphicx}
\DeclareGraphicsExtensions{.pdf,.jpeg,.png}
\else
\usepackage[dvips]{graphicx}
\fi
\usepackage{epstopdf}
\usepackage{amssymb,amsmath,amsthm}
\usepackage{amssymb}
\usepackage{wasysym}
\usepackage{algorithm}
\usepackage{algorithmic}
\usepackage{array}
\usepackage{cite}
\usepackage{color}
\usepackage{url}

\usepackage{epsfig,latexsym}
\usepackage{flushend}
\usepackage{verbatim}
\usepackage{amsopn}
\usepackage{booktabs}

\usepackage{stfloats}
\usepackage{enumerate}
\usepackage{hyperref}
\usepackage{subfigure}
\usepackage{caption}
\usepackage{bm}
\newtheorem{theorem}{Theorem}
\newtheorem{lemma}{Lemma}

\renewcommand{\baselinestretch}{0.955}

\begin{document}
	
	\title{Movable Antennas for Robust Wireless Sensing via Joint Cramér-Rao Bound and Sidelobe Minimization}
	
	\author{{Wenyan Ma, \IEEEmembership{Member, IEEE}, Lipeng Zhu, \IEEEmembership{Senior Member, IEEE}, Weitong Zhai, \IEEEmembership{Member, IEEE}, and  Rui Zhang, \IEEEmembership{Fellow, IEEE}}
		\vspace{-25pt}
		
		\thanks{W. Ma, W. Zhai, and R. Zhang are with the Department of Electrical and Computer Engineering, National University of Singapore, Singapore 117583 (e-mail: wenyan@u.nus.edu, wtzhai@nus.edu.sg, elezhang@nus.edu.sg).}
		\thanks{L. Zhu is with the State Key Laboratory of CNS/ATM and the School of Interdisciplinary Science, Beijing Institute of Technology, Beijing 100081, China (e-mail: zhulp@bit.edu.cn).}
	}
	\maketitle
	
	\begin{abstract}
	This paper presents a novel design approach for movable antenna (MA)-enabled wireless sensing systems by jointly minimizing the Cramér-Rao bound (CRB) and the maximum sidelobe level (MSL) of the ambiguity function via antenna position optimization. In particular, the mean squared error (MSE) of angle-of-arrival (AoA) estimation is decomposed into a local estimation error within the mainlobe of the ambiguity function (i.e., CRB) and an additional ambiguity error caused by its sidelobes. Since the MSE is dominated by the CRB in the high-signal-to-noise ratio (SNR) regime but by the sidelobes of the ambiguity function in the low-SNR regime, our analysis reveals a fundamental trade-off between CRB minimization and MSL minimization in the moderate-SNR regime. Specifically, minimizing the CRB prefers a narrower mainlobe, where antennas are concentrated near the two edges of the one-dimensional (1-D) movement region; whereas minimizing the MSL favors a wider mainlobe, where antennas are distributed more densely near the center of the movement region. Inspired by this and to ensure robust sensing performance across different SNR regimes, we formulate an optimization problem to minimize the CRB subject to a prescribed MSL constraint via antenna position optimization. An efficient successive convex approximation (SCA) algorithm is developed to optimize the antenna position vector (APV), and a 1-D linear search method is proposed to determine the optimal MSL threshold that minimizes the actual MSE for any given SNR. Numerical results demonstrate that the proposed scheme effectively balances the trade-off between MSL and CRB minimization, thus achieving a significantly lower AoA estimation MSE across the entire SNR range compared to conventional uniform and non-uniform fixed-position antenna (FPA) arrays.
		
	\end{abstract}
	\begin{IEEEkeywords}
	Wireless sensing, movable antenna (MA), antenna position optimization, Cramér-Rao bound (CRB), maximum sidelobe level (MSL).
	\end{IEEEkeywords}
	
	\section{Introduction}
	The sixth-generation (6G) wireless networks are envisioned to support numerous sensing-enabled applications, such as autonomous transportation, unmanned aerial vehicle (UAV) coordination, and intelligent robotic systems \cite{jiang2021the}. These emerging services impose stringent requirements on environmental awareness and localization accuracy, which cannot be achieved solely by conventional communication-oriented metrics, such as data rate and link reliability. Motivated by this trend, integrated sensing and communication (ISAC) has recently emerged as a promising technology that enables sensing and communication with shared spectrum, hardware platforms, or signal processing resources. Therefore, wireless sensing is expected to become an important service of future 6G networks.
	
	High-resolution sensing systems generally rely on large-aperture antenna arrays deployed at radar platforms or base stations (BSs) \cite{mailloux2005phased}. However, equipping a large number of antenna elements and the associated radio-frequency (RF) chains incurs substantial hardware cost and energy consumption. To address this issue, sparse antenna arrays have been extensively investigated, as they can achieve large effective apertures and high angular resolution with a reduced number of antennas \cite{roberts2011sparse}. Nevertheless, sparse arrays often suffer from persistent sidelobe effects and are typically implemented using fixed-position antennas (FPAs), whose array geometries remain unchanged after deployment and thus cannot flexibly adapt to dynamic sensing requirements. Moreover, both dense and sparse FPA arrays cannot fully exploit the spatial degrees of freedom (DoFs) available within a given deployment region.
	
	To overcome the limitations of conventional FPA arrays, recent studies have introduced movable antenna (MA)-enabled sensing systems \cite{zhu2023MAMag,zhu2025tutorial}, where the positions of antennas can be flexibly adjusted. By exploiting this new spatial DoF, MA-enabled systems can improve sensing performance with the same or even fewer antenna elements than traditional FPA arrays. Specifically, enlarging the antenna movement region effectively forms a larger aperture, thereby improving angular resolution. Moreover, properly designed antennas' positions can reduce the steering vector correlation associated with different spatial directions, thereby suppressing sidelobes and alleviating estimation ambiguity.	
	
	The concept of exploiting antenna movement in wireless systems can be traced back to the early study in 2009, where moving a single antenna within a confined region was shown to improve diversity gains \cite{zhao2009single}. Recently, the MA technology has emerged as a promising paradigm for multiple-input multiple-output (MIMO) systems across a variety of wireless scenarios, which is sometimes also referred to as fluid antenna systems \cite{wu2025fluid}. Existing studies have demonstrated that optimizing antennas' positions over wavelength-scale movement regions can significantly enhance the received signal-to-noise ratio (SNR) under diverse propagation conditions \cite{zhu2022MAmodel,mei2024movable,ning2024movable,tang2024secure}. Moreover, extensive studies have explored multiuser interference suppression via joint beamforming and antenna position optimization \cite{zhu2023MAmultiuser,wu2023movable,qin2024antenna,cheng2023sum,yang2024flexible,hu2024power,li2024minimizing}. In addition, the spatial multiplexing gain and the corresponding channel acquisition methods of MA-enabled MIMO systems have been investigated in \cite{ma2022MAmimo,chen2023joint,yeyuqi2023fluid,ma2023MAestimation,xiao2023channel}. The advantages of MAs have been further validated in multi-beamforming, satellite communications, and near-field wireless systems \cite{zhu2023MAarray,ma2024multi,ZhuLP_satellite_MA,zhu2024nearfield}. More recently, six-dimensional movable antenna (6DMA) architectures have been proposed in \cite{shao20246DMA,shao2024Mag6DMA,shao2024exploiting,shao2024distributed}, where three-dimensional (3-D) rotational DoFs are further incorporated besides 3-D antenna movement. To reduce implementation complexity, rotatable antennas have been proposed as a simplified form of 6DMA, where only orientation control is considered with fixed antenna position \cite{zheng2026rotatable}. Moreover, pinching antennas have also emerged, allowing antennas to be repositioned flexibly in large scales along pre-deployed one-dimensional (1-D) waveguides \cite{liu2025pinching}.

	The MA technology has also attracted growing interest in wireless sensing. Initial experimental studies showed that antenna movement can improve radar imaging performance and localization accuracy \cite{zhuravlev2015experi,hinske2008using}. Nevertheless, these early studies mainly focused on empirical performance validation and did not establish a rigorous theoretical relationship between antenna movement and sensing performance. Recent works optimized MA array geometries by minimizing the Cramér-Rao bound (CRB) for angle-of-arrival (AoA) estimation in both far-field and near-field sensing scenarios \cite{ma2024MAsensing,chen2025MAISACopt,wang2025MAnearsensing,mao2025movable}. In these studies, explicit CRB expressions were derived as functions of antennas' positions to facilitate array geometry optimization. This framework was also extended to 6DMA systems through the joint optimization of antennas' positions and orientations \cite{shao2024exploiting}. Moreover, to handle the non-convex minimum antenna spacing constraint, the discrete antennas' positions were relaxed into a continuous antenna density function, and the optimal antenna density that minimizes the worst-case near-field localization error was derived in \cite{liu2026optimal}. More recently, the authors in \cite{ma2025movabletra,ma20263D} exploited the joint space-time DoFs introduced by continuous antenna movement. By synthesizing a large virtual aperture over time, the resulting virtual MA array can achieve high angular resolution while avoiding grating lobes. Despite these advantages, existing studies have primarily focused on the performance metric of CRB, which mainly characterizes the sensing accuracy in the high-SNR regime. However, in the low-SNR regime, sensing accuracy is generally dominated by the sidelobes of the ambiguity function. Furthermore, in the moderate-SNR regime, the estimation performance is jointly determined by both the CRB and the sidelobe levels. Therefore, CRB-oriented minimization methods cannot guarantee robust sensing performance across the entire SNR range.
	
	To overcome this limitation, we present a novel design approach for MA-enabled wireless sensing systems by jointly minimizing the CRB and the maximum sidelobe level (MSL) of the ambiguity function via antenna position optimization. Unlike existing studies that rely on CRB-oriented minimization, which is insufficient in the low-/moderate-SNR regime where ambiguity errors due to sidelobes are significant, we optimize the antenna position vector (APV) to balance local estimation accuracy (i.e., CRB) and ambiguity error minimization, thereby ensuring robust sensing performance across the entire SNR range. The main contributions of this work are summarized as follows:
	
	\begin{itemize}
		\item First, we characterize the mean squared error (MSE) of AoA estimation by decomposing it into a local estimation error within the mainlobe of the ambiguity function (i.e., CRB) and an additional ambiguity error caused by its sidelobes. To ensure robust sensing performance across different SNR regimes, we formulate an optimization problem to minimize the CRB subject to a prescribed MSL constraint via antenna position optimization.
		\item Next, we provide a comprehensive theoretical analysis to reveal the fundamental trade-off between CRB minimization and MSL minimization. By considering an asymptotic regime where the discrete antennas' positions are relaxed into a continuous antenna density function, we derive a closed-form optimal antenna density that minimizes the MSL for a given mainlobe boundary. The analysis reveals that minimizing the CRB prefers a narrower mainlobe, where antennas are concentrated near the two edges of the 1-D movement region; whereas minimizing the MSL favors a wider mainlobe, where antennas are distributed more densely near the center of the movement region and more sparsely toward its edges.
		\item Furthermore, to solve the highly non-convex APV optimization problem under a prescribed MSL constraint, we develop an efficient iterative algorithm based on successive convex approximation (SCA). Moreover, since the AoA estimation MSE is jointly governed by the operating SNR and the prescribed MSL threshold, we propose a 1-D linear search method to determine the optimal MSL threshold that minimizes the actual MSE for any given SNR.
		\item Finally, numerical results demonstrate that the proposed antenna position optimization scheme effectively balances the trade-off between MSL minimization and CRB minimization. By searching for the optimal MSL threshold associated with the operating SNR, the proposed scheme can mitigate ambiguity errors in the low-SNR regime and approaches the CRB in the high-SNR regime. As a result, it achieves a significantly lower AoA estimation MSE across the entire SNR range compared to conventional uniform and non-uniform FPA arrays.
	\end{itemize}
	
	The rest of this paper is organized as follows. Section II establishes the system model for the 1-D MA-enabled sensing system, characterizes the AoA estimation MSE, and formulates the CRB minimization problem subject to a prescribed MSL constraint. Section III provides a comprehensive theoretical performance analysis for both CRB-only and MSL-only minimization. In Section IV, an SCA-based iterative algorithm is developed to optimize the APV, together with a 1-D linear search method to determine the optimal MSL threshold for any given SNR. Simulation results demonstrating the effectiveness of the proposed scheme are presented in Section V, followed by the conclusion in Section VI.
	
	\textit{Notations}: Boldface lowercase and uppercase letters denote vectors and matrices, respectively. The complex conjugate, transpose, and conjugate transpose operations are denoted by $(\cdot)^{\mathsf *}$, $(\cdot)^{\mathsf T}$, and $(\cdot)^{\mathsf H}$, respectively. The sets of complex and real matrices of size $P\times Q$ are represented by $\mathbb{C}^{P\times Q}$ and $\mathbb{R}^{P\times Q}$, respectively. The $p$th element of a vector $\bm{a}$ is written as $\bm{a}[p]$. The $N$-dimensional identity matrix and the $N$-dimensional all-ones column vector are denoted by $\bm{I}_N$ and $\bm{1}_N$, respectively. The set difference between $\mathcal{A}$ and $\mathcal{B}$ is expressed as $\mathcal{A}\setminus \mathcal{B}$. The $2$-norm of a vector $\bm{a}$ is denoted by $\|\bm{a}\|_2$.
	
	\section{MSL and CRB Characterization}
	
	\subsection{System Model}
	We study a wireless sensing system employing $N$ MAs for AoA estimation, as illustrated in Fig.~\ref{system}. Each MA can be moved along a 1-D line segment of length $A$. Let $x_n \in [0, A]$ denote the coordinate of the $n$th MA for $n=1,2,\ldots,N$. The APV is defined as $\bm{x} \triangleq [x_1, x_2, \ldots, x_N]^{\mathsf T} \in \mathbb{R}^{N\times 1}$. Without loss of generality, we assume $0 \le x_1 < x_2 < \cdots < x_N \le A$.
	
	\begin{figure}[!t]
		\centering
		\includegraphics[width=75mm]{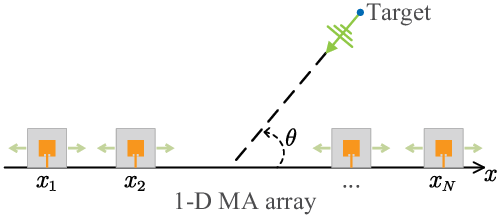}
		\caption{The 1-D MA array for target angle estimation.}
		\label{system}
	\end{figure}
	
	To estimate the AoA of the target, the receiver collects reflected signals over $T$ consecutive snapshots. The propagation channel between the target and the receiver is assumed to be dominated by a line-of-sight (LoS) component that remains invariant during the $T$ snapshots. Given that the target-receiver distance is typically much larger than the antenna movement region, the far-field channel model is adopted for the target-receiver channel \cite{zhu2022MAmodel,zhu2023MAmultiuser}. Specifically, let $\theta \in [\theta_{\rm min}, \theta_{\rm max}]$ denote the physical AoA associated with the LoS path, where $\theta_{\rm min} \in [0,\pi]$ and $\theta_{\rm max} \in [0,\pi]$ denote the minimum and maximum physical AoAs of the sensing region of interest, respectively. For analytical convenience, the spatial AoA is denoted by $u = \cos\theta \in [u_{\rm min}, u_{\rm max}]$, where $u_{\rm min}\triangleq \cos\theta_{\rm max}$ and $u_{\rm max}\triangleq \cos\theta_{\rm min}$. Then, the steering vector corresponding to APV $\bm{x}$ and spatial AoA $u$ is expressed as
	\begin{equation}\label{g}
		\bm{\alpha}(\bm{x}, u) \triangleq \left[ e^{j\frac{2\pi}{\lambda} x_1 u}, e^{j\frac{2\pi}{\lambda} x_2 u}, \ldots, e^{j\frac{2\pi}{\lambda} x_N u} \right]^{\mathsf T} \in \mathbb{C}^{N\times 1},
	\end{equation}
	where $\lambda$ denotes the carrier wavelength. Let $\beta$ denote the complex channel gain from the target to the origin of the antenna movement line segment. The channel vector can be written as
	\begin{equation}\label{H}
		\bm{h}(\bm{x}, u) = \beta \bm{\alpha}(\bm{x}, u).
	\end{equation}
	Then, the received signal at the $t$th ($t=1,2,\ldots,T$) snapshot is written as
	\begin{equation}
		\bm{y}_t = \bm{h}(\bm{x},u)s_t + \bm{z}_t,
	\end{equation}
	where $s_t$ denotes the reflected signal from the target with average power $\mathbb{E}\{|s_t|^2\} = P$. The noise vector $\bm{z}_t \sim \mathcal{CN}(0,\sigma^2\bm{I}_N)$ denotes the additive white Gaussian noise (AWGN), which follows a circularly symmetric complex Gaussian (CSCG) distribution with zero mean and covariance $\sigma^2\bm{I}_N$, where $\sigma^2$ is the noise power. Then, aggregating the received signals over $T$ snapshots yields
	\begin{equation}\label{Y}
		\bm{Y} \triangleq [\bm{y}_1,\bm{y}_2,\ldots,\bm{y}_T] = \bm{h}(\bm{x},u)\bm{s}^{\mathsf T} + \bm{Z},
	\end{equation}
	where $\bm{s} \triangleq [s_1,s_2,\ldots,s_T]^{\mathsf T} \in \mathbb{C}^{T\times 1}$ and $\bm{Z} \triangleq [\bm{z}_1,\bm{z}_2,\ldots,\bm{z}_T] \in \mathbb{C}^{N \times T}$.

	\subsection{AoA Estimation}
	For a given APV $\bm{x}$, the spatial AoA $u$ can be estimated using the maximum likelihood estimation (MLE) method. Specifically, the unknown parameters $\{\beta,u\}$ are jointly estimated by solving
	\begin{align}\label{MLE_joint_T}
		(\hat{\beta},\hat{u}) = \arg\min_{\bar{\beta},\bar{u}} \|\bm{Y} - \bar{\beta}\bm{\alpha}(\bm{x},\bar{u})\bm{s}^{\mathsf T}\|_F^2.
	\end{align}	
	For any candidate spatial AoA $\bar{u}$, the optimal estimation of $\beta$ can be expressed as
	\begin{align}\label{beta_cond}
		\hat{\beta} &= \frac{\bm{\alpha}(\bm{x},\bar{u})^{\mathsf H} \bm{Y} \bm{s}^*}{\|\bm{\alpha}(\bm{x},\bar{u})\|_2^2 \|\bm{s}\|_2^2}.
	\end{align}
	Then, substituting \eqref{beta_cond} back into \eqref{MLE_joint_T} yields
	\begin{align}\label{expansion}
		&\|\bm{Y} - \hat{\beta}\bm{\alpha}(\bm{x},\bar{u})\bm{s}^{\mathsf T}\|_F^2 \notag \\
		=& \|\bm{Y}\|_F^2 + |\hat{\beta}|^2 \|\bm{\alpha}(\bm{x},\bar{u})\|_2^2 \|\bm{s}\|_2^2 - 2\Re\left\{ \hat{\beta}^* \bm{\alpha}(\bm{x},\bar{u})^{\mathsf H} \bm{Y} \bm{s}^* \right\} \notag \\
		=& \|\bm{Y}\|_F^2 + \frac{\left|\bm{\alpha}(\bm{x},\bar{u})^{\mathsf H} \bm{Y} \bm{s}^*\right|^2}{\|\bm{\alpha}(\bm{x},\bar{u})\|_2^2 \|\bm{s}\|_2^2} - 2\frac{\left|\bm{\alpha}(\bm{x},\bar{u})^{\mathsf H} \bm{Y} \bm{s}^*\right|^2}{\|\bm{\alpha}(\bm{x},\bar{u})\|_2^2 \|\bm{s}\|_2^2} \notag \\
		=& \|\bm{Y}\|_F^2 - \frac{\left|\bm{\alpha}(\bm{x},\bar{u})^{\mathsf H} \bm{Y} \bm{s}^*\right|^2}{\|\bm{\alpha}(\bm{x},\bar{u})\|_2^2 \|\bm{s}\|_2^2}.
	\end{align}
	Since $\|\bm{\alpha}(\bm{x},\bar{u})\|_2^2 = N$, $\|\bm{s}\|_2^2=PT$, and $\|\bm{Y}\|_F^2$ are independent of $\bar{u}$, the MLE of $u$ can be simplified as
	\begin{align}\label{MLE_u_final}
		\hat{u} &= \arg\min_{\bar{u}\in[u_{\rm min}, u_{\rm max}]} \left( \|\bm{Y}\|_F^2 - \frac{\left|\bm{\alpha}(\bm{x},\bar{u})^{\mathsf H} \bm{Y} \bm{s}^*\right|^2}{\|\bm{\alpha}(\bm{x},\bar{u})\|_2^2 \|\bm{s}\|_2^2} \right) \notag \\
		&= \arg\max_{\bar{u}\in[u_{\rm min}, u_{\rm max}]} \left|\bm{\alpha}(\bm{x},\bar{u})^{\mathsf H} \bm{Y} \bm{s}^*\right|^2,
	\end{align}
	which can be computed via the 1-D exhaustive search over $\bar{u}\in [u_{\rm min}, u_{\rm max}]$. Accordingly, the MSE of the AoA estimation is defined as
	\begin{equation}
		{\rm MSE}(u) \triangleq \mathbb{E}\{|u - \hat{u}|^2\}.
	\end{equation}
	
	\subsection{MSE Characterization}
	Next, we characterize the MSE by evaluating the MLE spectrum $F_{\rm MLE}(\bar{u}|u) \triangleq \left|\bm{\alpha}(\bm{x},\bar{u})^{\mathsf H} \bm{Y} \bm{s}^*\right|^2$, which can be further expressed as
	\begin{align}\label{MLE_spectrum}
		F_{\rm MLE}(\bar{u}|u) &= \left| \bm{\alpha}(\bm{x},\bar{u})^{\mathsf H} \left( \beta \bm{\alpha}(\bm{x},u)\bm{s}^{\mathsf T} + \bm{Z} \right) \bm{s}^* \right|^2 \notag \\
		&= \left| \beta \|\bm{s}\|_2^2 \bm{\alpha}(\bm{x},\bar{u})^{\mathsf H} \bm{\alpha}(\bm{x},u) + \bm{\alpha}(\bm{x},\bar{u})^{\mathsf H} \bm{Z} \bm{s}^* \right|^2 \notag \\
		&\triangleq \left| \beta P T \bm{\alpha}(\bm{x},\bar{u})^{\mathsf H} \bm{\alpha}(\bm{x},u) + \tilde{z} \right|^2,
	\end{align}
	where $\tilde{z} \triangleq \bm{\alpha}(\bm{x},\bar{u})^{\mathsf H} \bm{Z} \bm{s}^* \sim \mathcal{CN}(0,NPT\sigma^2)$ denotes the effective noise. It can be observed from \eqref{MLE_spectrum} that the MLE spectrum $F_{\rm MLE}(\bar{u}|u)$ is affected by the effective noise $\tilde{z}$ and the correlation between the steering vectors $\bm{\alpha}(\bm{x},\bar{u})$ and $\bm{\alpha}(\bm{x},u)$, i.e., $\bm{\alpha}(\bm{x},\bar{u})^{\mathsf H}\bm{\alpha}(\bm{x},u)$. Moreover, the steering vector correlation can be expressed as
	\begin{align}
		\bm{\alpha}(\bm{x},\bar{u})^{\mathsf H}\bm{\alpha}(\bm{x},u)=\sum_{n=1}^{N} e^{j\frac{2\pi}{\lambda}x_n(u-\bar{u})},
	\end{align}
	which depends only on the APV $\bm{x}$ and spatial angle separation $\Delta\triangleq u-\bar{u}$. For any given spatial AoA $u$, the spatial angle separation is bounded as $\Delta \in [u - u_{\rm max}, u - u_{\rm min}]$. Since the true AoA $u$ is unknown and lies within the angular region of interest $[u_{\rm min},u_{\rm max}]$, the feasible range of $\Delta$ depends on the particular realization of $u$. To ensure worst-case sensing performance over the entire angular region, we consider the global domain of the spatial angle separation, i.e., 
	\begin{align}
		\Delta \in [u_{\rm min}-u_{\rm max}, u_{\rm max}-u_{\rm min}].
	\end{align}
	Then, we introduce the ambiguity function $G(\bm{x},\Delta)$ to characterize the steering vector correlation, i.e.,
	\begin{align}
		G(\bm{x},\Delta) &\triangleq \frac{1}{N^2} \left|\bm{\alpha}(\bm{x},u-\Delta)^{\mathsf H}\bm{\alpha}(\bm{x},u)\right|^2 \notag\\
		&= \frac{1}{N^2} \left|\sum_{n=1}^{N} e^{j\frac{2\pi}{\lambda}x_n\Delta}\right|^2.
	\end{align}
	The ambiguity function $G(\bm{x},\Delta)$ exhibits a dominant mainlobe centered at $\Delta = 0$ and multiple sidelobes. Define the normalized MLE spectrum as $\bar{F}_{\rm MLE}(\bar{u}|u)\triangleq \frac{1}{P^2T^2N^2|\beta|^2}F_{\rm MLE}(\bar{u}|u)$ such that $\bar{F}_{\rm MLE}(\bar{u}|u) = G(\bm{x},u-\bar{u})$ in the absence of noise. Fig.~\ref{MSL_spectrum} illustrates the normalized MLE spectrum $\bar{F}_{\rm MLE}(\bar{u}|u)$ under different SNR levels. We set $N=16$ and $u=0$, and consider a uniform linear array (ULA) with half-wavelength inter-antenna spacing. The SNR is defined as $PT|\beta|^2/\sigma^2$. As shown in Fig.~\ref{MSL_spectrum}, in the high-SNR regime (e.g., SNR $=10$ dB), the peak of the normalized MLE spectrum remains within the mainlobe region and deviates slightly from the true AoA $\bar{u}=u$. This local estimation error can be characterized by the CRB that serves as the lower-bound of ${\rm MSE}(u)$, i.e., \cite{ma2024MAsensing,ma2025MAISAC}
	\begin{equation}\label{CRB1-D}
		{\rm MSE}(u) \ge {\rm CRB}_u(\bm{x}) 
		= \frac{\sigma^2 \lambda^2}{8\pi^2 T P N |\beta|^2} \frac{1}{{\rm var}(\bm{x})},
	\end{equation}
	where ${\rm var}(\bm{x}) \triangleq \frac{1}{N}\sum_{n=1}^{N}x_n^2 - \mu(\bm{x})^2$ denotes the variance of $\bm{x}$, and $\mu(\bm{x}) \triangleq \frac{1}{N}\sum_{n=1}^{N}x_n$ denotes the mean of $\bm{x}$. As the SNR decreases (e.g., SNR $=-5$ dB), the peak of the normalized MLE spectrum may deviate from the mainlobe and instead occur in the sidelobe region. As the SNR further decreases (e.g., SNR $=-30$ dB), the receive signal becomes indistinguishable from noise. Since the closed-form expression for the MSE is generally intractable, the MSE can be approximated by decomposing it into two components: the local estimation error within the mainlobe (i.e., the CRB) and the ambiguity error in the sidelobe region (i.e., when the sidelobe exceeds the mainlobe) \cite{athley2005threshold}.
	
	\begin{figure}[!t]
		\centering
		\includegraphics[width=80mm]{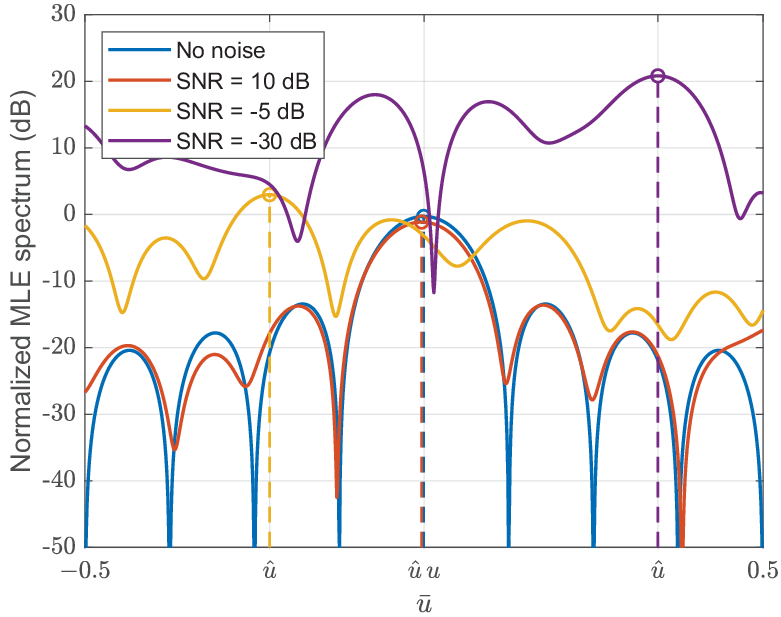}
		\caption{Normalized MLE spectrum under different SNR levels.}
		\label{MSL_spectrum}
	\end{figure}
	
	\begin{figure}[!t]
		\centering
		\includegraphics[width=80mm]{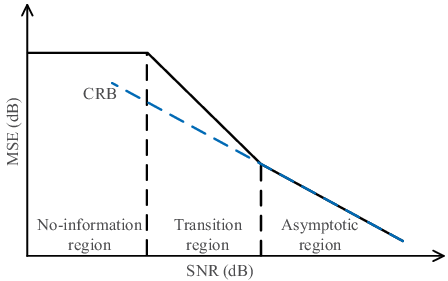}
		\caption{Illustration of the MSE versus SNR \cite{lirenwang2024irs}.}
		\label{asymptotic_region}
	\end{figure}

	The ambiguity error decreases much more rapidly than the CRB as the SNR increases, resulting in distinct SNR regimes in which different error components dominate the overall MSE. Fig.~\ref{asymptotic_region} illustrates the MSE versus SNR via the MLE method. As observed from Fig.~\ref{asymptotic_region}, the MSE-SNR region can be divided into three sub-regions, i.e., the asymptotic region, the transition region, and the no-information region. Specifically, in the high-SNR regime, the ambiguity error probability approaches zero, such that the MSE converges to the CRB, corresponding to the asymptotic region. As the SNR decreases, the ambiguity error becomes the dominant error component, leading to the transition region. As the SNR further decreases, the received signal becomes indistinguishable from noise, and the estimated AoA $\hat{u}$ tends to be uniformly distributed over $[u_{\rm min}, u_{\rm max}]$. Thus, the MSE is ${\rm MSE}(u)=\frac{(u_{\rm max} - u_{\rm min})^2}{12} + \left( \frac{u_{\rm min} + u_{\rm max}}{2} - u \right)^2$, which characterizes the no-information region. Therefore, we can dynamically reconfigure the APV $\bm{x}$ according to different sensing SNR and requirements.
	
	\subsection{Problem Formulation}
	
	Since the MSE is governed by both the CRB in the mainlobe and the ambiguity error in the sidelobe region, we aim to minimize the CRB by optimizing the APV $\bm{x}$, subject to a given constraint on the MSL of the ambiguity function $G(\bm{x},\Delta)$. From \eqref{CRB1-D}, the objective can be equivalently written as
	\begin{align}
		\min_{\bm{x}}{\rm CRB}_u(\bm{x}) \iff \max_{\bm{x}}{\rm var}(\bm{x}).
	\end{align}
	Moreover, since $G(\bm{x},\Delta)$ is symmetric with respect to (w.r.t.) $\Delta$, i.e., $G(\bm{x},-\Delta)=G(\bm{x},\Delta)$, we only consider the angular domain of the ambiguity function with non-negative angle separation, i.e., $\Delta\geq 0$ to reduce the computational complexity in the APV optimization. 
	
	Let $\eta$ denote the given MSL threshold. Since the MSE is dominated by the CRB and the ambiguity error in the asymptotic and transition SNR regions, respectively, the MSL constraint can be progressively relaxed as the SNR increases. Consequently, the threshold $\eta$ generally increases with the SNR. Furthermore, the relationship among the MSE, SNR, and $\eta$ is highly nonlinear and analytically intractable. Therefore, the optimal MSL threshold for a given SNR is determined numerically in the next section. 
	
	Let $\mathbb{D}\triangleq [0, u_{\rm max}-u_{\rm min}]$ denote the angular domain of the ambiguity function. The boundary of the mainlobe is denoted by $\Delta_{\rm ML}$, which is defined as the location of the first local minimum of $G(\bm{x},\Delta)$ for $\Delta>0$. Accordingly, the angular sidelobe domain is defined as $\mathbb{D}_{\rm SL}\triangleq \mathbb{D}\setminus[0,\Delta_{\rm ML})=[\Delta_{\rm ML}, u_{\rm max}-u_{\rm min}]$. Since obtaining the maximum sidelobe and its corresponding location in closed-form is intractable, we uniformly discretize $\mathbb{D}_{\rm SL}$ into $Q$ grid points, denoted by	$\bar{\mathbb{D}}_{\rm SL}\triangleq \{\bar{\Delta}_q\}_{q=1}^Q$. Accordingly, the MSL constraint can be approximated as
	\begin{align}\label{Geta2}
		G(\bm{x},\bar{\Delta}_q) \le \eta, \quad q = 1,\dots,Q,
	\end{align}
	where $\eta$ is closely related to the operating SNR. In the high-SNR regime, the MSE is dominated by the CRB, and thus a relatively large $\eta$ can be adopted to emphasize CRB minimization. In contrast, in the low-SNR regime, where ambiguity errors become dominant, a smaller value of $\eta$ is required to suppress the ambiguity-induced estimation error.
	
	To mitigate the mutual coupling among the MAs at the receiver, a minimum inter-antenna spacing constraint $D$ is imposed on every adjacent antenna pair, i.e., $x_n - x_{n-1} \geq D$, $n=2,3,\ldots,N$. Accordingly, the optimization problem for the APV of the MA array deployed over a 1-D line segment can be formulated as
	\begin{subequations}
		\begin{align}
			\textrm {(P1)}~~\max_{\bm{x}} \quad & {\rm var}(\bm{x})   \label{P1a}\\
			\text{s.t.} \quad & G(\bm{x},\bar{\Delta}_q) \le \eta,~~ q = 1,\dots,Q, \label{P1b}\\
			& x_1\geq0, x_N\leq A, \label{P1c}\\
			& x_n-x_{n-1} \geq D,~~ n = 2,3,\ldots,N. \label{P1d}
		\end{align}
	\end{subequations}
	Problem (P1) is a highly non-convex optimization problem, since the objective function in \eqref{P1a} is non-concave and the MSL constraints in \eqref{P1b} are non-convex w.r.t. $\bm{x}$. In the following, we first investigate the performance under the separate minimization of the CRB and the MSL. Then, an efficient algorithm is developed to obtain a suboptimal solution to problem (P1).
	
	\section{Performance Analysis}
	
	To gain insights into the trade-off associated with the APV $\bm{x}$ in problem (P1) between minimizing the CRB and MSL, we investigate two separate optimization problems, i.e., CRB-only minimization and MSL-only minimization.
	
	\subsection{CRB-Only Minimization}
	When only CRB minimization is considered and the MSL constraint in \eqref{P1b} is dropped, problem (P1) can be simplified as
	\begin{subequations}
		\begin{align}
			\textrm {(P1-CRB)}~~\max_{\bm{x}} \quad & {\rm var}(\bm{x})   \label{P1a_CRB}\\
			\text{s.t.} \quad & \eqref{P1c}, \eqref{P1d}. \notag
		\end{align}
	\end{subequations}
	Although problem (P1-CRB) is non-convex due to the non-concave objective function in \eqref{P1a_CRB}, its globally optimal solution can still be obtained in closed-form as \cite{ma2024MAsensing}
	\begin{align}\label{1-Doptimal}
		x^{\rm CRB}_n=\left\{
		\begin{array}{ll}
			(n-1)D,    & n=1,2,\ldots,\lfloor N/2 \rfloor;\\
			A-(N-n)D,  & n=\lfloor N/2 \rfloor+1,\ldots,N.
		\end{array} \right.
	\end{align}
	Equation \eqref{1-Doptimal} indicates that minimizing the CRB of the AoA estimation MSE over a 1-D line segment requires partitioning the MAs into two groups. Specifically, one half of the MAs are deployed at the left boundary of the line segment, while the remaining half are deployed at the right boundary, with adjacent antennas in each group separated by the minimum inter-antenna spacing $D$. Moreover, it is shown that the optimal APV in \eqref{1-Doptimal}, which corresponds to CRB-only minimization, yields the narrowest mainlobe \cite{chen2025ambiguity}.
	
	\subsection{MSL-Only Minimization}
	When the CRB objective in \eqref{P1a} is dropped, problem (P1) reduces to a feasibility problem that seeks an APV $\bm{x}$ satisfying constraint \eqref{P1b}. Since problem (P1) may become infeasible when the prescribed MSL threshold $\eta$ is too small, $\eta$ cannot be chosen arbitrarily. Therefore, to determine the minimum achievable value of $\eta$ (i.e., the lower-bound required for the feasibility of problem (P1)), the MSL minimization problem is formulated as
	\begin{subequations}
		\begin{align}
			\textrm {(P1-MSL)}~~\min_{\bm{x}} \quad & \max_{1\le q\le Q}  G(\bm{x},\bar{\Delta}_q)   \label{P1a_MSL}\\
			\text{s.t.} \quad & \eqref{P1c}, \eqref{P1d}. \notag
		\end{align}
	\end{subequations}
	Problem (P1-MSL) is highly non-convex w.r.t. the APV $\bm{x}$, rendering the global optimal solution difficult to obtain. Moreover, the boundary of the mainlobe $\Delta_{\rm ML}$ is determined by the APV $\bm{x}$, which renders the characterization of the mainlobe width intractable. 
	
	To facilitate tractable analysis and provide insights, we approximate the discrete antennas' positions by a continuous antenna density function $\rho(x)$ defined over the 1-D line segment $x \in [0, A]$, satisfying $\int_{0}^{A} \rho(x) dx = N$ \cite{liu2026optimal}. Moreover, the inter-antenna spacing constraint in \eqref{P1d} is neglected. Accordingly, the ambiguity function associated with the continuous antenna density $\rho(x)$ can be expressed as
	\begin{equation}
		\bar{G}(\rho(x), \Delta) = \frac{1}{N^2} \left| \int_{0}^{A} \rho(x) e^{j\frac{2\pi}{\lambda}x\Delta} dx \right|^2.
	\end{equation}
	If $\Delta_{\rm ML}$ is not fixed for minimizing the MSL, the resulting solution becomes trivial. Specifically, the optimal $\rho(x)$ collapses to an impulse function, i.e., $\rho(x)=N\delta(x-x_{\rm c})$,	where $x_{\rm c}\in[0,A]$ and the impulse function is defined as $\delta(t) \triangleq 
	\begin{cases} 
		+\infty, & t = 0 \\ 
		0, & t \neq 0 
	\end{cases}$ with $\int_{-\infty}^{+\infty} \delta(t) dt = 1$. In this case, the corresponding ambiguity function can be written as $\bar{G}(\rho(x),\Delta) = \left| e^{j\frac{2\pi}{\lambda}x_{\rm c}\Delta} \right|^2 = 1$. This implies that all antennas are effectively concentrated at a single location, yielding an omnidirectional ambiguity function with no distinguishable sidelobe structure. However, such a solution is impractical since it completely loses angular resolution capability.
	
	To avoid this trivial solution, we fix $\Delta_{\rm ML}$ and optimize only $\rho(x)$ for MSL minimization. Then, the MSL minimization problem can be reformulated as
	\begin{subequations}
		\begin{align}
			\textrm {(P2)}~~\min_{\rho(x)} \quad & \max_{\Delta \in \mathbb{D}_{\rm SL}}  \bar{G}(\rho(x), \Delta)   \label{P2a}\\
			\text{s.t.} \quad & \int_{0}^{A} \rho(x) dx = N, \label{P2b}\\
			& \rho(x) \geq 0,~~ \forall x \in [0, A], \label{P2c}
		\end{align}
	\end{subequations}
	where \eqref{P2c} guarantees the non-negativity of the antenna density. Although problem (P2) is non-convex due to the non-convexity of the objective function in \eqref{P2a}, the search space can be significantly reduced by exploiting the symmetry of $\rho(x)$, as shown in the following lemma.
	\begin{lemma}\label{lemma_symmetry}
		There always exists an optimal antenna density function $\rho^\star(x)$ that is symmetric w.r.t. the 1-D line segment center $x=A/2$, i.e., \begin{align}
			\rho^\star(x)=\rho^\star(A-x).
		\end{align}
	\end{lemma}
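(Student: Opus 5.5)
Given any optimal density $\rho^\star(x)$ of (P2), the plan is to symmetrize it. Define the reflected density $\tilde{\rho}(x)\triangleq\rho^\star(A-x)$ and the averaged density $\rho_{\rm s}(x)\triangleq\frac{1}{2}\left(\rho^\star(x)+\rho^\star(A-x)\right)$. We then show that $\rho_{\rm s}$ is feasible, symmetric by construction, and has an objective value no larger than that of $\rho^\star$. Since $\rho^\star$ is optimal, this forces $\rho_{\rm s}$ to be optimal as well.

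\textbf{Feasibility and symmetry.} Nonnegativity \eqref{P2c} of $\rho_{\rm s}$ is immediate. For \eqref{P2b}, the change of variables $x\mapsto A-x$ gives $\int_0^A\tilde{\rho}(x)dx=N$, so $\int_0^A\rho_{\rm s}(x)dx=N$. The identity $\rho_{\rm s}(x)=\rho_{\rm s}(A-x)$ holds by definition.

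\textbf{Reflection preserves the ambiguity function.} Write $I_\rho(\Delta)\triangleq\int_0^A\rho(x)e^{j\frac{2\pi}{\lambda}x\Delta}dx$. Substituting $x\mapsto A-x$ gives
\begin{align}
I_{\tilde{\rho}}(\Delta)=e^{j\frac{2\pi}{\lambda}A\Delta}\int_0^A\rho^\star(x)e^{-j\frac{2\pi}{\lambda}x\Delta}dx=e^{j\frac{2\pi}{\lambda}A\Delta}\,I_{\rho^\star}(\Delta)^{*},
\end{align}
where the last step uses that $\rho^\star$ is real. Taking moduli, $\bar{G}(\tilde{\rho},\Delta)=\bar{G}(\rho^\star,\Delta)$ for every $\Delta$. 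Hence the reflected density attains the same objective value over $\mathbb{D}_{\rm SL}$.

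\textbf{Convexity step.} For each fixed $\Delta$, the map $\rho\mapsto|I_\rho(\Delta)|$ is the modulus of a linear functional, so it is convex. By the triangle inequality,
\begin{align}
|I_{\rho_{\rm s}}(\Delta)|\le\tfrac{1}{2}|I_{\rho^\star}(\Delta)|+\tfrac{1}{2}|I_{\tilde{\rho}}(\Delta)|=|I_{\rho^\star}(\Delta)|.
\end{align}
Squaring and dividing by $N^2$ gives $\bar{G}(\rho_{\rm s},\Delta)\le\bar{G}(\rho^\star,\Delta)$ pointwise in $\Delta$. Taking the maximum over $\mathbb{D}_{\rm SL}$ then shows that the objective \eqref{P2a} at $\rho_{\rm s}$ is at most the optimal value, so $\rho_{\rm s}$ is an optimal symmetric solution. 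Although (P2) is non-convex overall (its objective is a maximum of squared moduli), the pointwise-modulus argument above is enough.

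\textbf{Main subtlety: the mainlobe boundary.} The delicate point is that $\mathbb{D}_{\rm SL}=[\Delta_{\rm ML},u_{\rm max}-u_{\rm min}]$ depends on $\Delta_{\rm ML}$, and $\Delta_{\rm ML}$ was defined as the first local minimum of the ambiguity function. We need the same sidelobe domain to be used for $\rho_{\rm s}$. In (P2), $\Delta_{\rm ML}$ is treated as fixed, so $\mathbb{D}_{\rm SL}$ is a fixed set independent of $\rho$, and the argument goes through directly.

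If one instead insisted that the first local minimum of $\bar{G}(\rho_{\rm s},\cdot)$ coincide with $\Delta_{\rm ML}$, a further argument would be needed. I would address this by remarking that the problem, as posed with $\Delta_{\rm ML}$ fixed, optimizes only over $\rho$, which makes this requirement unnecessary.
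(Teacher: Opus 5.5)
Your proposal is correct and follows essentially the same route as the paper's proof: reflect $\rho^\star$ about $A/2$, use the identity $\bar{F}(\rho_{\rm flip},\Delta)=e^{j\frac{2\pi}{\lambda}A\Delta}\bar{F}^*(\rho^\star,\Delta)$ to show the reflected density has the same ambiguity function, then average the two and apply the triangle inequality to get an MSL no larger than the optimum. Stating the bound pointwise in $\Delta$ before maximizing, and noting explicitly that $\Delta_{\rm ML}$ is held fixed in (P2) (which the paper uses without comment), are small refinements rather than a different argument.
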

	\begin{proof}
		See Appendix A.
	\end{proof}
	
	Based on Lemma \ref{lemma_symmetry}, let $\cosh(\cdot)$ denote the hyperbolic cosine function, defined as $\cosh(z) \triangleq \frac{e^z + e^{-z}}{2}$, and let $I_1(\cdot)$ denote the modified Bessel function of the first kind of order one, defined as $I_1(z) \triangleq \sum_{m=0}^{\infty} \frac{1}{m!(m+1)!}\left(\frac{z}{2}\right)^{2m+1}$. Then, the optimal antenna density function $\rho^\star(x)$ and the corresponding ambiguity function $\bar{G}(\rho^\star(x), \Delta)$ are given by the following theorem.
	\begin{theorem}\label{rhooptthe}
		The optimal antenna density function $\rho^\star(x)$ that minimizes the MSL for a given mainlobe boundary $\Delta_{\rm ML}$ over the 1-D line segment $x \in [0, A]$ is given by
		\begin{align}\label{rhoopt}
			\rho^\star(x) = \frac{N}{2\cosh(\tilde{b})} \left( \delta(x) + \delta(x-A) + \frac{\tilde{b} I_1\left(\frac{2\tilde{b}\sqrt{Ax - x^2}}{A}\right)}{\sqrt{Ax - x^2}} \right),
		\end{align}
		where $\tilde{b} \triangleq \sqrt{\frac{\pi^2 A^2}{\lambda^2} \Delta_{\rm ML}^2 - \frac{\pi^2}{4}}$. The corresponding ambiguity function $\bar{G}(\rho^\star(x), \Delta)$ is given by 
		\begin{align}
			&\bar{G}(\rho^\star(x), \Delta)  \\
			&= \begin{cases}
				\frac{1}{\cosh(\tilde{b})} \cosh\left(\frac{\pi A}{\lambda}\sqrt{\tilde{\Delta}_{\rm ML}^2 - \Delta^2}\right), & |\Delta| < \tilde{\Delta}_{\rm ML}, \\
				\frac{1}{\cosh(\tilde{b})} \cos\left(\frac{\pi A}{\lambda}\sqrt{\Delta^2 - \tilde{\Delta}_{\rm ML}^2}\right), & |\Delta| \ge \tilde{\Delta}_{\rm ML}, \notag
			\end{cases}
			\end{align}
			where $\tilde{\Delta}_{\rm ML}\triangleq \frac{\lambda\tilde{b}}{\pi A}$.
		The minimum MSL is given by
		\begin{align}\label{MSLopt}
			\max_{\Delta \in \mathbb{D}_{\rm SL}}  \bar{G}(\rho^\star(x), \Delta) = \frac{1}{\cosh^2(\tilde{b})}.
		\end{align}
	\end{theorem}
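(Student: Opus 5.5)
The plan is to verify the stated $\rho^\star(x)$ directly, computing its ambiguity function and MSL, and then to prove optimality by an equiripple/zero-counting argument in the spirit of Dolph--Chebyshev and van der Maas. By Lemma \ref{lemma_symmetry} we may restrict to densities with $\rho(x)=\rho(A-x)$. Shifting to the centered coordinate $y=x-A/2\in[-A/2,A/2]$ gives
\begin{align}
\int_0^A\rho(x)e^{j\frac{2\pi}{\lambda}x\Delta}dx=e^{j\frac{\pi A}{\lambda}\Delta}F(\Delta),
\end{align}
where $F(\Delta)\triangleq\int_{-A/2}^{A/2}\rho(y+A/2)\cos(\frac{2\pi}{\lambda}y\Delta)dy$ is real and even. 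Hence $\bar G$ is determined by $F/N$, and the min-max problem (P2) becomes a problem about the real pattern $F/N$. I will present the ambiguity function in the theorem as the real pattern $F(\Delta)/N$; the MSL in \eqref{MSLopt} is its square.

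\textbf{Step 1 (transform of $\rho^\star$).} The two impulses at $x=0$ and $x=A$ contribute $\frac{N}{\cosh\tilde b}\cos(\frac{\pi A}{\lambda}\Delta)$. For the continuous part, write $\sqrt{Ax-x^2}=\sqrt{A^2/4-y^2}$. The key is the integral identity
\begin{align}
\int_{-a}^{a}\frac{b\,I_1\!\left(\frac{b}{a}\sqrt{a^2-y^2}\right)}{\sqrt{a^2-y^2}}\cos(ky)\,dy=2\cosh\!\left(\sqrt{b^2-k^2a^2}\right)-2\cos(ka),
\end{align}
used with $a=A/2$, $k=2\pi\Delta/\lambda$ and $b=\tilde b$. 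I would prove it by expanding $I_1$ in its power series and integrating term by term. Each term is a Beta-type integral of $(a^2-y^2)^m\cos(ky)$, which gives a Bessel function $J_{m+1/2}$. Resumming via the Gegenbauer/Sonine addition formula, or checking that both sides satisfy the same ODE in $b$ with equal values at $b=0$, yields the right-hand side.

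Adding the two contributions, the $\cos$ terms cancel. This gives $F(\Delta)/N=\cosh(\sqrt{\tilde b^2-(\pi A\Delta/\lambda)^2})/\cosh\tilde b$, which becomes the stated $\cos(\cdot)$ branch for $|\Delta|\ge\tilde\Delta_{\rm ML}$. As checks, $F(0)/N=1$ confirms \eqref{P2b}, and $I_1>0$ confirms \eqref{P2c}. On the oscillatory branch the first zero/minimum of $F^2$ occurs where $\frac{\pi A}{\lambda}\sqrt{\Delta^2-\tilde\Delta_{\rm ML}^2}=\pi/2$, i.e. at $\Delta=\Delta_{\rm ML}$, which is exactly how $\tilde b$ is defined. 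Beyond $\Delta_{\rm ML}$ the pattern oscillates with $|F/N|\le 1/\cosh\tilde b$ and attains $\pm1/\cosh\tilde b$ alternately at the points
\begin{align}
\Delta_k=\sqrt{\tilde\Delta_{\rm ML}^2+(k\lambda/A)^2},\quad k=1,2,\ldots,
\end{align}
which gives \eqref{MSLopt}.

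\textbf{Step 2 (optimality).} Suppose some admissible symmetric $\rho$ with the same mainlobe boundary had pattern $F$ with $\max_{\Delta\ge\Delta_{\rm ML}}|F/N|<1/\cosh\tilde b$. Set $D=F^\star-F$. By Paley--Wiener, $D$ is an even entire function of exponential type at most $\pi A/\lambda$, and $D(0)=0$ (a double zero, since $D$ is even). At the equiripple points $\Delta_k$ the sign of $D$ alternates strictly, so $D$ has a zero in each interval $(\Delta_k,\Delta_{k+1})$. Combined with the mainlobe constraint (both patterns reach their first minimum at $\Delta_{\rm ML}$), this gives an extra zero in $(0,\Delta_{\rm ML}]$. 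Counting zeros in $|\Delta|\le R$ then gives at least $2AR/\lambda+c$ zeros with $c>0$, which exceeds what Jensen's formula, or Cartwright's theorem for functions bounded on the real line, allows for type $\pi A/\lambda$ unless $D\equiv0$. This contradiction gives optimality.

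I expect this zero-counting step to be the main obstacle. The difficulties are making the mainlobe-constraint zero rigorous and handling a finite sidelobe domain $\mathbb D_{\rm SL}$. On a finite domain I would argue on the alternation points that lie in $\mathbb D_{\rm SL}$ and approximate with the finite-$N$ Chebyshev polynomial argument, taking the continuous limit.
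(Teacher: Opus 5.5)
Your Step 1 is correct but runs in the opposite direction from the paper. The paper gets the pattern as the $M\to\infty$ limit of a Dolph--Chebyshev FIR response and then recovers $\rho^\star$ by a distributional inverse Fourier transform. You instead start from $\rho^\star$ and compute its transform with the classical van der Maas/Taylor identity, which checks support, $\rho^\star\ge 0$ and $\int\rho^\star=N$ along the way. You also rightly read the displayed ``$\bar G$'' as the real pattern whose square is the MSL. For optimality, the paper simply transfers finite-$M$ Chebyshev optimality to the limit. Your entire-function argument is a genuinely different and potentially rigorous replacement, but as written it breaks exactly at the extra zero.

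\textbf{The extra zero needs a null.} That zero exists only if the competitor satisfies $F(\Delta_{\rm ML})=0$, i.e.\ its mainlobe boundary is a \emph{null}.
\begin{itemize}
\item It does not follow from ``first local minimum of $G$ at $\Delta_{\rm ML}$''. If that minimum has $F(\Delta_{\rm ML})>0$, then $D(\Delta_{\rm ML})<0$ and $D(\Delta_1)<0$ have the same sign, and nothing forces a zero in $(0,\Delta_1)$.
\item It does not follow from (P2) as written either, and there the claim is false. Replacing $\tilde b$ by $b=\frac{\pi A}{\lambda}\Delta_{\rm ML}$ in \eqref{rhoopt} gives a feasible density whose pattern is bounded by $1/\cosh b<1/\cosh\tilde b$ on $|\Delta|\ge\Delta_{\rm ML}$.
\end{itemize}
So impose $F(\Delta_{\rm ML})=0$ explicitly. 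This is what the paper's choice of $\tilde b_M$ encodes. It also survives the symmetrization in Lemma~\ref{lemma_symmetry}, because $\bar F(\rho_{\rm flip},\Delta)=e^{j\frac{2\pi}{\lambda}A\Delta}\bar F^*(\rho,\Delta)$ vanishes wherever $\bar F(\rho,\Delta)$ does.

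\textbf{The zero count must be done on average.} The forced zeros are a double zero at $0$, $\pm\Delta_{\rm ML}$, and one in each $\pm(\Delta_k,\Delta_{k+1})$. For $t\ge\Delta_1$ this gives $n(t)\ge 2+2\lfloor\frac{A}{\lambda}\sqrt{t^2-\tilde\Delta_{\rm ML}^2}\rfloor$. That exceeds $2At/\lambda$ only on average, by one zero, not by a uniform $c>0$. Cartwright's density theorem cannot detect an $O(1)$ excess, but Jensen's formula can. Since $|D(z)|\le 2e^{\frac{\pi A}{\lambda}|\mathrm{Im}\,z|}$, Jensen gives $\int_1^R n(t)t^{-1}dt\le\frac{2AR}{\lambda}+O(1)$. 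The forced zeros give $\int_1^R n(t)t^{-1}dt\ge\frac{2AR}{\lambda}+\log R-O(1)$, which is the contradiction you want.

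\textbf{Drop the bounded-$\mathbb D_{\rm SL}$ fallback.} Finitely many alternation points force only finitely many zeros, so no contradiction is possible. In fact the statement is false on a bounded domain. Take the $(K+1)$-element Dolph--Chebyshev array spanning $[0,A]$ with first null at $\Delta_{\rm ML}$: its pattern is $T_K(x_0\cos(v/K))/T_K(x_0)$, with $T_K$ the degree-$K$ Chebyshev polynomial, $v=\frac{\pi A}{\lambda}\Delta$ and $x_0=\cos(\frac{\pi}{2K})/\cos(\frac{b}{K})$.
\begin{itemize}
\item Its weights are non-negative, so it is feasible.
\item Expanding gives $\cos(\frac{\pi}{2K})-\cos(\frac{b}{K})\cosh(\frac{\tilde b}{K})=\frac{b^2\tilde b^2}{6K^4}+O(K^{-6})>0$, hence $K\,\mathrm{arccosh}(x_0)>\tilde b$.
\item Its sidelobes therefore stay below $1/\cosh\tilde b$ on $[\Delta_{\rm ML},K\lambda/A-\Delta_{\rm ML}]$.
\item Its grating lobe at $K\lambda/A$ moves outside any bounded $\mathbb D_{\rm SL}$ once $K$ is large.
\end{itemize}
The paper's proof also silently works on the unbounded region $|v|\ge b$. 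State the result for $|\Delta|\ge\Delta_{\rm ML}$ with the null constraint, and your argument then closes.
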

	\begin{proof}
		See Appendix B.
	\end{proof}
	
	\begin{figure}[!t]
		\centering
		\includegraphics[width=75mm]{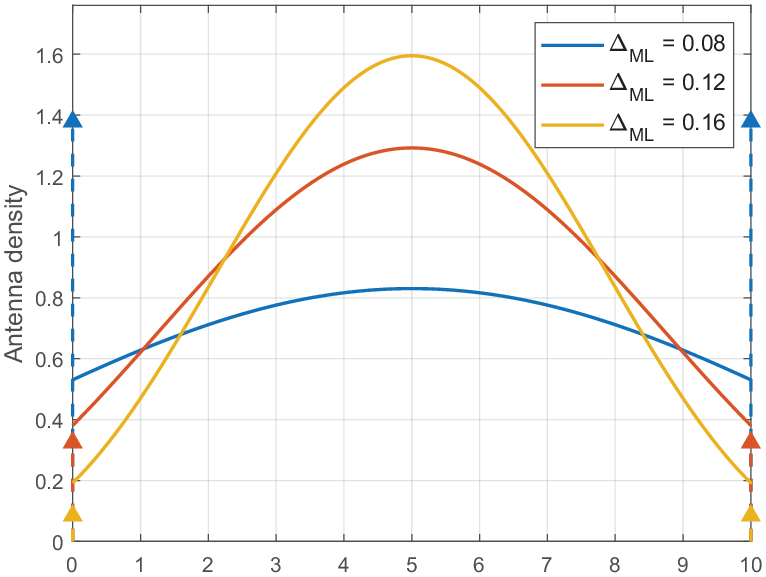}
		\caption{Illustration of the optimal antenna density function $\rho^\star(x)$.}
		\label{density_ideal}
	\end{figure}
	
	\begin{figure}[!t]
		\centering
		\includegraphics[width=75mm]{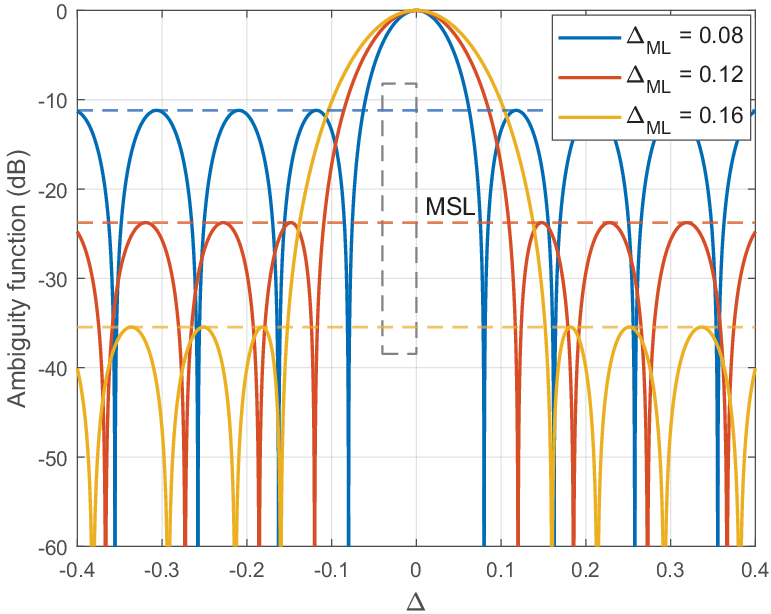}
		\caption{Illustration of the ambiguity function $\bar{G}(\rho^\star(x), \Delta)$.}
		\label{AF_ideal}
	\end{figure}
	
	Fig.~\ref{density_ideal} illustrates the optimal antenna density function $\rho^\star(x)$ for $A=10\lambda$. As shown in \eqref{rhoopt}, the optimal antenna density consists of a continuous modified Bessel function component and two impulse functions located at the boundaries of the 1-D line segment, i.e., $x=0$ and $x=A$. It can be observed from Fig.~\ref{density_ideal} that the continuous Bessel component attains its maximum at the line segment center $x=A/2$ and decreases monotonically toward both edges. Furthermore, the weight ratio between the impulse functions and the peak value of the Bessel function increases as the mainlobe boundary $\Delta_{\rm ML}$ decreases. When a wider mainlobe is required, the Bessel function becomes dominant, indicating that the practical discrete antennas should be distributed more densely near the center of the line segment and more sparsely near the boundaries. In contrast, when a narrower mainlobe is required, the impulse functions become dominate, indicating that the practical discrete antennas should be concentrated near the two edges of the line segment. In the limiting case corresponding to the narrowest mainlobe width and the maximum MSL, i.e., $\frac{1}{\cosh^2(\tilde{b})} = 1$, the optimal antenna density reduces to $\rho^\star(x)=\frac{N}{2}\delta(x)+\frac{N}{2}\delta(x-A)$. This result implies that all antennas are fully concentrated at the two edges of the line segment. Such antenna density maximizes the spatial variance ${\rm var}(x)$ and thus minimizes the CRB, thereby corroborating the result in \eqref{1-Doptimal}.
	
	Fig.~\ref{AF_ideal} shows the ambiguity function $\bar{G}(\rho^\star(x), \Delta)$ corresponding to the optimal antenna density function $\rho^\star(x)$. It is observed that the minimum MSL decreases monotonically as the mainlobe boundary $\Delta_{\rm ML}$ increases, which verifies the result in \eqref{MSLopt}. Since the APV that minimizes the CRB only yields the narrowest mainlobe \cite{chen2025ambiguity}, this observation reveals the fundamental trade-off between CRB minimization and MSL minimization in antenna position design. Specifically, minimizing the CRB favors a narrower mainlobe, whereas minimizing the MSL requires a wider mainlobe.

	\section{Optimization Algorithm}
	\label{sec:algorithm}
	
	In this section, we develop an efficient algorithm based on SCA to solve problem (P1). The optimization problem (P1) is highly non-convex and generally intractable to solve globally. The non-convexity mainly arises from two aspects: i) the objective function in \eqref{P1a} involves the maximization of a convex variance function; and ii) the MSL constraint in \eqref{P1b} depends on the ambiguity function $G(\bm{x},\bar{\Delta}_q)$, which is highly nonlinear and non-convex w.r.t. the APV $\bm{x}$. Moreover, the discrete grid set $\{\bar{\Delta}_q\}_{q=1}^Q$ is coupled with the mainlobe boundary $\Delta_{\rm ML}$. To address these challenges, we develop an iterative algorithm based on SCA.
	
	\subsection{Lower-Bound Approximation of the Objective Function \eqref{P1a}}
	We first reformulate the variance of the APV into a quadratic form. Define $\bm{J} \triangleq \frac{1}{N}\bm{I}_N - \frac{1}{N^2}\bm{1}_N\bm{1}_N^{\mathsf T}$. The objective function can then be expressed as
	\begin{align}
		{\rm var}(\bm{x}) = \bm{x}^{\mathsf T} \bm{J} \bm{x}.
	\end{align}
	Since $\bm{J}$ is positive semi-definite, ${\rm var}(\bm{x})$ is convex w.r.t. $\bm{x}$. To convexify the objective function in \eqref{P1a}, we employ the first-order Taylor expansion to construct a global linear lower-bound. Specifically, at the $i$th iteration of SCA with a given local point $\bm{x}^{(i)}$, we have
	\begin{align}\label{eq:obj_surrogate}
		{\rm var}(\bm{x}) &\ge {\rm var}(\bm{x}^{(i)}) + 2 \left(\bm{x}^{(i)}\right)^{\mathsf T} \bm{J} \left(\bm{x} - \bm{x}^{(i)}\right) \notag\\
		&= 2 \left(\bm{x}^{(i)}\right)^{\mathsf T} \bm{J} \bm{x} - {\rm var}(\bm{x}^{(i)}) \triangleq \zeta(\bm{x}; \bm{x}^{(i)}).
	\end{align}
	Since $\zeta(\bm{x}; \bm{x}^{(i)})$ serves as a global lower-bound of ${\rm var}(\bm{x})$, maximizing the surrogate function $\zeta(\bm{x}; \bm{x}^{(i)})$ guarantees a monotonic improvement of the original objective value ${\rm var}(\bm{x})$.
	
	\subsection{Upper-Bound Approximation of the MSL Constraint \eqref{P1b}}
	Next, we address the non-convex MSL constraint in \eqref{P1b}. For notation simplicity, we define $G_q(\bm{x}) \triangleq G(\bm{x},\bar{\Delta}_q)$. To construct a convex feasible subset for the constraints $G_q(\bm{x}) \le \eta$, $q = 1,\dots,Q$, we adopt the SCA method. Specifically, at the $i$th iteration of SCA, a quadratic surrogate upper-bound of $G_q(\bm{x})$ is constructed as
	\begin{align}\label{eq:constraint_surrogate}
		G_q(\bm{x}) &\le G_q(\bm{x}^{(i)}) + \nabla G_q(\bm{x}^{(i)})^{\mathsf T} (\bm{x} - \bm{x}^{(i)}) \notag\\
		&~~~~ + \frac{L_q}{2} \|\bm{x} - \bm{x}^{(i)}\|^2 \triangleq \widetilde{G}_q(\bm{x}; \bm{x}^{(i)}),
	\end{align}
	where the gradient vector $\nabla G_q(\bm{x}) \in \mathbb{R}^{N\times1}$ is given by
	\begin{align}
		\frac{\partial G_q(\bm{x})}{\partial x_n} &= \frac{j 2\pi\bar{\Delta}_q}{\lambda N^2} \sum_{m \neq n} \left( e^{j\frac{2\pi}{\lambda}\bar{\Delta}_q(x_n - x_m)} - e^{-j\frac{2\pi}{\lambda}\bar{\Delta}_q(x_n - x_m)} \right) \notag\\
		&= -\frac{4\pi\bar{\Delta}_q}{\lambda N^2} \sum_{m=1}^{N} \sin\left(\frac{2\pi}{\lambda}\bar{\Delta}_q(x_n - x_m)\right).
	\end{align}
	Moreover, $L_q$ is a constant chosen such that $\widetilde{G}_q(\bm{x}; \bm{x}^{(i)})$ is a global upper-bound of $G_q(\bm{x})$. To ensure this condition, $L_q$ must satisfy $L_q \bm{I}_N \succeq \nabla^2 G_q(\bm{x})$. The Hessian matrix $\nabla^2 G_q(\bm{x})$ can be derived as
	\begin{align}
		\nabla^2 G_q(\bm{x})[n,n] &= -\frac{8\pi^2\bar{\Delta}_q^2}{\lambda^2 N^2} \sum_{m \neq n} \cos\left(\frac{2\pi}{\lambda}\bar{\Delta}_q(x_n - x_m)\right), \notag\\
		\nabla^2 G_q(\bm{x})][n,m] &= \frac{8\pi^2\bar{\Delta}_q^2}{\lambda^2 N^2} \cos\left(\frac{2\pi}{\lambda}\bar{\Delta}_q(x_n - x_m)\right), ~~m \neq n.
	\end{align}
	Let $\lambda_{\max}(\nabla^2 G_q(\bm{x}))$ denote the maximum eigenvalue of $\nabla^2 G_q(\bm{x})$. According to the Gershgorin circle theorem, $\lambda_{\max}(\nabla^2 G_q(\bm{x}))$ is upper-bounded by the maximum absolute row sum of $\nabla^2 G_q(\bm{x})$, i.e.,
	\begin{align}
		&\lambda_{\max}(\nabla^2 G_q(\bm{x})) \\
		&\le \max_{1\le n\le N} \left( \left|\nabla^2 G_q(\bm{x})[n,n]\right| + \sum_{m \neq n} \left|\nabla^2 G_q(\bm{x})[n,m]\right| \right) \notag\\
		&\le \frac{8\pi^2\bar{\Delta}_q^2 (N-1)}{\lambda^2 N^2} + \frac{8\pi^2\bar{\Delta}_q^2 (N-1)}{\lambda^2 N^2} = \frac{16\pi^2\bar{\Delta}_q^2(N-1)}{\lambda^2 N^2}, \notag
	\end{align}
	where the second inequality holds since $|\cos(x)|\le1$. Since $\lambda_{\max}(\nabla^2 G_q(\bm{x})) \bm{I}_N \succeq \nabla^2 G_q(\bm{x})$, a valid closed-form choice of $L_q$ is given by
	\begin{align}\label{eq:Lq_bound}
		L_q = \lambda_{\max}(\nabla^2 G_q(\bm{x})) = \frac{16\pi^2\bar{\Delta}_q^2(N-1)}{\lambda^2 N^2}, ~~q = 1, \dots, Q.
	\end{align}
	By replacing $G_q(\bm{x})$ with the convex quadratic surrogate function $\widetilde{G}_q(\bm{x}; \bm{x}^{(i)})$, the original constraint \eqref{P1b} is tightened into the convex constraint $\widetilde{G}_q(\bm{x}; \bm{x}^{(i)}) \le \eta$. Consequently, any feasible solution satisfying the tightened constraint is guaranteed to satisfy the original MSL constraint in \eqref{P1b}.
	
	\subsection{Overall Algorithm}
	Since the discrete grid set $\{\bar{\Delta}_q\}_{q=1}^Q$ is determined by the mainlobe boundary $\Delta_{\rm ML}$, at the beginning of each iteration, we first determine $\Delta_{\rm ML}$ based on the previous APV $\bm{x}^{(i)}$, and then update the discrete grid set $\bar{\mathbb{D}}_{\rm SL}$ by uniformly discretizing the angular sidelobe domain $\mathbb{D}_{\rm SL}=[\Delta_{\rm ML}, u_{\rm max}-u_{\rm min}]$ into $Q$ grid points. Furthermore, directly enforcing the MSL constraint in \eqref{P1b} may render the optimization problem infeasible during the initial stage of the SCA iterations. To address this issue, we introduce a non-negative slack variable $\xi$ to adaptively relax the MSL constraint. Meanwhile, a sufficiently large penalty factor $\tau \gg 0$ is incorporated into the objective function to penalize constraint violations, thereby driving $\xi$ rapidly toward zero as the iterations proceed. Accordingly, for a given $\bm{x}^{(i)}$, the convex optimization problem at the $(i+1)$th iteration can be formulated as
	\begin{subequations}
		\begin{align}
			\textrm {(P3)}~~\max_{\bm{x}, \xi} \quad & \zeta(\bm{x}; \bm{x}^{(i)}) - \tau \xi   \label{P3a}\\
			\text{s.t.} \quad & \widetilde{G}_q(\bm{x}; \bm{x}^{(i)}) \le \eta + \xi,~~ q = 1,\dots,Q, \label{P3b}\\
			&\xi \ge 0, \label{P3d}\\
			& \eqref{P1c}, \eqref{P1d}. \notag
		\end{align}
	\end{subequations}
	Since the objective function in \eqref{P3a} and the constraints in \eqref{P3d}, \eqref{P1c}, and \eqref{P1d} are linear, while the constraint in \eqref{P3b} is convex quadratic, problem (P3) is a quadratically constrained quadratic program (QCQP), which can be efficiently solved using off-the-shelf convex optimization solvers such as MATLAB’s built-in fmincon function.
	
	Based on the solutions derived above for solving problem (P3), the complete SCA-based algorithm for solving problem (P1) is summarized in Algorithm~\ref{alg:SCA}. Specifically, at the beginning of the $i$th iteration, we first determine the mainlobe boundary $\Delta_{\rm ML}$ according to the current APV $\bm{x}^{(i)}$, which is the location of the first local minimum of $G(\bm{x}^{(i)},\Delta)$ for $\Delta>0$. Then, we update the discrete grid set $\{\bar{\Delta}_q\}_{q=1}^Q$. The gradient vector $\nabla G_q(\bm{x})$ is then evaluated to construct the surrogate constraint in \eqref{eq:constraint_surrogate}, while the first-order Taylor expansion $\zeta(\bm{x}; \bm{x}^{(i)})$ is employed to derive the linear lower-bound objective function in \eqref{eq:obj_surrogate}. By solving the resulting convex QCQP problem (P3), the APV is updated to $\bm{x}^{(i+1)}$. The above procedure is repeated until the improvement in the objective value of problem (P3) falls below a prescribed tolerance threshold $\epsilon$.

	\begin{algorithm}[!t]
		\caption{SCA-Based Algorithm for Solving Problem (P1)}
		\label{alg:SCA}
		\begin{algorithmic}[1]
			\STATE \emph{Input:} $N$, $A$, $D$, $\eta$, $\tau$, $Q$, $\epsilon$, $\bm{x}^0$.
			\STATE Initialization: $i \leftarrow 0$, $\bm{x} \leftarrow \bm{x}^0$.
			
			\WHILE{Increase of \eqref{P3a} is above $\epsilon$} 
			
			\STATE Given $\bm{x}^{(i)}$, obtain the mainlobe boundary $\Delta_{\rm ML}$  and update the discrete grid set $\{\bar{\Delta}_q\}_{q=1}^Q$.
			\STATE Calculate the gradient vector $\nabla G_q(\bm{x})$ and construct the surrogate constraint $\widetilde{G}_q(\bm{x}; \bm{x}^{(i)})$ in \eqref{eq:constraint_surrogate}.
			\STATE Construct the linear lower-bound of the objective function $\zeta(\bm{x}; \bm{x}^{(i)})$ in \eqref{eq:obj_surrogate}.
			\STATE Obtain $\{\bm{x}^{(i+1)}, \xi\}$ by solving problem (P3).
			\STATE $i \leftarrow i+1$.
			
			\ENDWHILE    
			\STATE $\bm{x} \leftarrow \bm{x}^{(i)}$.
			\STATE \emph{Output:} $\bm{x}$.
		\end{algorithmic}
	\end{algorithm}
	
	Next, we analyze the convergence of the proposed Algorithm~\ref{alg:SCA}. Specifically, we show that ${\rm var}(\bm{x}) - \tau \xi$ monotonically increases while all constraints remain satisfied throughout the iterative procedure. Let $\{\bm{x}^{(i)}, \xi^{(i)}\}$ and $\{\bm{x}^{(i+1)}, \xi^{(i+1)}\}$ denote the optimal solutions of problem (P3) obtained at the $i$th and $(i+1)$th iterations, respectively. Then, the following relations hold for the objective function:
	\begin{align}\label{eq:convergence_obj}
		{\rm var}(\bm{x}^{(i+1)}) - \tau \xi^{(i+1)} &\overset{(a_1)}\geq \zeta(\bm{x}^{(i+1)}; \bm{x}^{(i)}) - \tau \xi^{(i+1)} \notag\\
		&\overset{(a_2)}\geq \zeta(\bm{x}^{(i)}; \bm{x}^{(i)}) - \tau \xi^{(i)} \notag\\ 
		&= {\rm var}(\bm{x}^{(i)}) - \tau \xi^{(i)},
	\end{align}
	where inequality $(a_1)$ holds since the first-order Taylor expansion $\zeta(\bm{x}; \bm{x}^{(i)})$ serves as a global lower-bound of ${\rm var}(\bm{x})$, while inequality $(a_2)$ holds because $\zeta(\bm{x}; \bm{x}^{(i)}) - \tau \xi$ is maximized at each iteration. Therefore, the sequence of objective values $\{{\rm var}(\bm{x}^{(i)}) - \tau \xi^{(i)}\}_{i=0}^{\infty}$ is guaranteed to be monotonically non-decreasing.
	
	Furthermore, for the MSL constraint in \eqref{P1b}, since $G_q(\bm{x})\le \widetilde{G}_q(\bm{x}; \bm{x}^{(i)})$ and $\bm{x}^{(i+1)}$ is feasible for problem (P3), it follows that
	\begin{align}\label{eq:convergence_const}
		G_q(\bm{x}^{(i+1)}) \leq \widetilde{G}_q(\bm{x}^{(i+1)}; \bm{x}^{(i)}) \leq \eta + \xi, \quad \forall q = 1,\dots,Q.
	\end{align}
	This result implies that the solution obtained at each iteration satisfies the relaxed MSL constraint $G_q(\bm{x}^{(i+1)}) \leq \eta + \xi$, and becomes feasible for the original MSL constraint in \eqref{P1b} when the slack variable $\xi$ is driven to zero. Moreover, since the antennas' positions are confined within the finite 1-D line segment $[0,A]$, ${\rm{var}}(\bm{x})$ is upper-bounded. Thus, the monotonically non-decreasing sequence $\{{\rm var}(\bm{x}^{(i)}) - \tau \xi^{(i)}\}_{i=0}^{\infty}$ is guaranteed to converge. Therefore, the proposed algorithm converges to at least a locally optimal solution of problem (P1).
	
	Finally, we analyze the computational complexity of the proposed algorithm. The dominant computational cost at each iteration arises from solving the convex QCQP problem (P3), which involves $N$ optimization variables and $Q+N$ constraints. Using standard interior-point methods, the computational complexity required to solve the QCQP with accuracy $\kappa$ is given by $\mathcal{O}\big((Q+N)^{1.5}N^2\ln(1/\kappa)\big)$. Let $I_{\rm iter}$ denote the total number of iterations required for convergence. The overall computational complexity of Algorithm~\ref{alg:SCA} is thus given by $\mathcal{O}\big(I_{\rm iter}(Q+N)^{1.5}N^2\ln(1/\kappa)\big)$. Since the antenna position optimization can be performed offline, the resulting polynomial-time complexity remains computationally affordable for the implementation.
	
	\subsection{Optimal MSL Threshold Search for MSE Minimization}\label{MSLoptsearch}
	As shown in Fig.~\ref{MSL_spectrum}, the AoA estimation MSE is jointly affected by the SNR and the MSL threshold $\eta$. For any given SNR, there exists an optimal MSL threshold $\bar{\eta}$ that minimizes the resulting MSE. However, the relationship among the MSE, the SNR, and $\eta$ is highly nonlinear and difficult to characterize analytically. To determine the optimal threshold $\bar{\eta}$ for a given SNR, we employ a 1-D linear search over the feasible range $\eta\in[0,1]$.
	
	Specifically, the feasible range $[0,1]$ is first uniformly discretized into a finite candidate set $\{\eta_1,\eta_2,\ldots,\eta_{K_\eta}\}$ containing $K_\eta$ candidate values. For each candidate threshold $\eta_k$, $k=1,\ldots,K_\eta$, Algorithm~\ref{alg:SCA} is executed to obtain the corresponding candidate APV. Then, for a given operating SNR, the actual AoA estimation MSE associated with each candidate APV is evaluated using the MLE method through Monte Carlo simulations. Finally, the optimal threshold $\bar{\eta}$ is selected as the one yielding the minimum simulated MSE.

	\section{Numerical Results}
	
	In this section, numerical results are provided to evaluate the performance of the proposed antenna position optimization scheme for the 1-D MA sensing system. Unless otherwise specified, we consider a 1-D line segment with length $A = 10\lambda$. We set $\lambda=0.05$ m corresponding to the carrier frequency of $6$ GHz. The number of MAs is set to $N = 16$, and the minimum inter-antenna spacing is set to $D = 0.5\lambda$. For the proposed Algorithm~\ref{alg:SCA}, the convergence threshold is set to $\epsilon = 10^{-4}$, the number of discrete grid points is set to $Q=500$, and the initial APV $\bm{x}^0$ is set as a ULA with inter-antenna spacing of $1.3D$. In addition, we set $u_{\rm min}=-0.5$ and $u_{\rm max}=0.5$, such that the angular domain of the ambiguity function is given by $\mathbb{D}=[0,1]$. The spatial AoA $u$ follows a uniform distribution over $[u_{\rm min}, u_{\rm max}]$.
	
	To demonstrate the effectiveness of the proposed antenna position optimization scheme, we compare it with the following benchmark schemes:
	i) \textbf{ULA}: a ULA with half-wavelength inter-antenna spacing; and ii) \textbf{Split ULA (SULA)}: the optimal linear array minimizing the CRB of the AoA estimation MSE, with the APV given by \eqref{1-Doptimal}.

	\begin{figure}[!t]
		\centering
		\includegraphics[width=75mm]{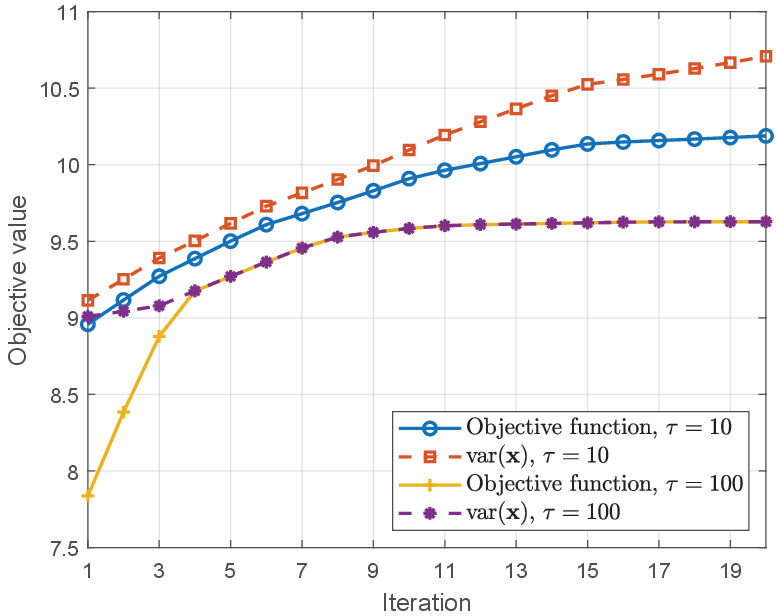}
		\caption{Convergence behavior of Algorithm~\ref{alg:SCA} with different $\tau$.}
		\label{convergence}
	\end{figure}

	Fig.~\ref{convergence} shows the convergence behavior of the proposed Algorithm~\ref{alg:SCA} with different penalty factors $\tau$. The MSL threshold is set to $\eta=-15$ dB, and two penalty factors (i.e., $\tau=10$ and $\tau=100$) are considered. For each case, we plot both the objective function of problem (P3) in \eqref{P3a} and the original objective function ${\rm var}(\bm{x})$ in \eqref{P1a} at each iteration. It can be observed that the objective value of problem (P3) increases monotonically throughout the iterative process, thereby verifying the convergence of the proposed Algorithm~\ref{alg:SCA}. Moreover, when a relatively small penalty factor, i.e., $\tau=10$, is adopted, a noticeable gap exists between the objective function of problem (P3) and the original objective value ${\rm var}(\bm{x})$. In contrast, when the penalty factor is increased to $\tau=100$, the gap gradually diminishes as the iterations proceed, demonstrating that the objective function of problem (P3) asymptotically converges to the original objective ${\rm var}(\bm{x})$. It is worth noting that the MSL of the initial APV $\bm{x}^{(0)}$ violates the prescribed MSL constraint of $-15$ dB. Nevertheless, by introducing a sufficiently large penalty factor, i.e., $\tau=100$, the proposed Algorithm~\ref{alg:SCA} can start from an infeasible initialization and progressively drive the solution toward feasibility, which significantly enhances the flexibility of the initialization. Furthermore, for $\tau=100$, the algorithm converges within approximately $13$ iterations. Thus, $\tau=100$ is adopted in the subsequent simulations.

	\begin{figure}[!t]
		\centering
		\includegraphics[width=70mm]{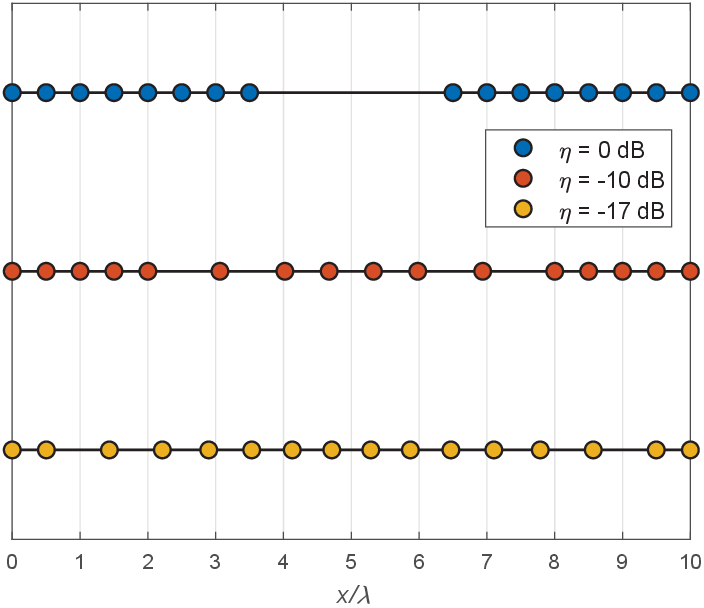}
		\caption{Illustration of the MAs’ positions.}
		\label{APV}
	\end{figure}

	Fig.~\ref{APV} shows the optimized antennas' positions obtained by solving problem (P1) with different MSL thresholds. We set $\eta=0$ dB, $-10$ dB, and $-17$ dB, respectively. When $\eta=0$ dB, the MSL constraint becomes inactive, and the optimized antennas' positions reduce to two ULAs with inter-antenna spacing $D$ located at the edges $x=0$ and $x=A$, which is consistent with the CRB-only minimization solution given in \eqref{1-Doptimal}. As the MSL threshold $\eta$ decreases, the antennas become increasingly concentrated near the center of the line segment while gradually becoming sparser toward the edges. This observation is consistent with the optimal continuous antenna density function $\rho^\star(x)$ derived in Theorem \ref{rhooptthe}.

	\begin{figure}[!t]
		\centering
		\includegraphics[width=75mm]{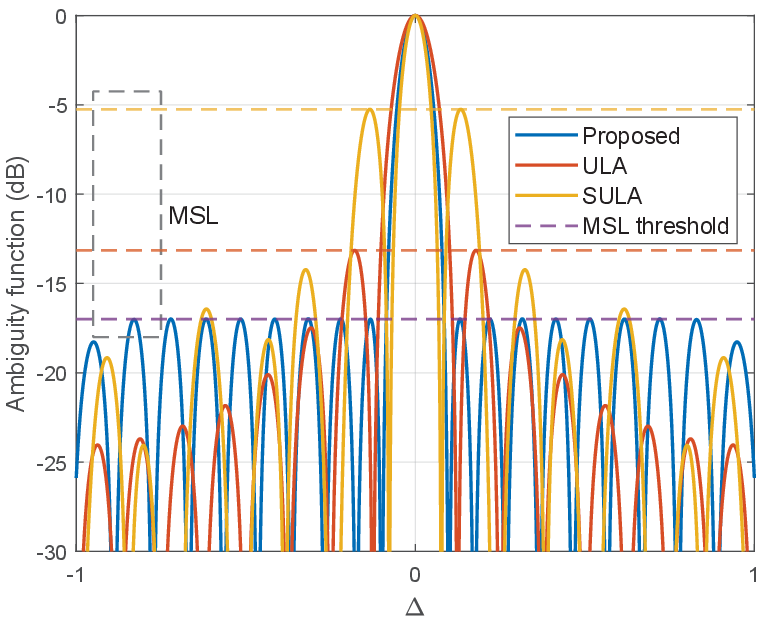}
		\caption{Comparison of ambiguity function for different schemes.}
		\label{AF}
	\end{figure}

	Fig.~\ref{AF} shows the ambiguity functions $G(\bm{x}, \Delta)$ for different schemes. We set $\eta=-17$ dB. As shown in Fig.~\ref{AF}, the SULA scheme achieves the narrowest mainlobe and the lowest CRB. However, it also exhibits the highest MSL. In contrast, the proposed scheme suppresses all sidelobes below the prescribed threshold $\eta$ while maintaining a relatively narrow mainlobe comparable to that of the ULA scheme. This demonstrates that the proposed scheme can effectively balance the trade-off between MSL minimization and CRB minimization.

%
%

	\begin{figure}[!t]
		\centering
		\includegraphics[width=75mm]{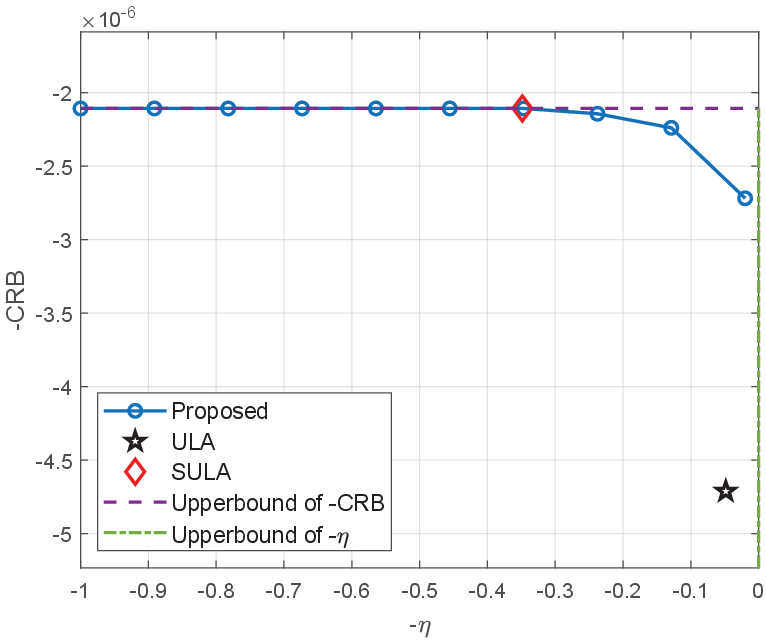}
		\caption{Comparison of the $-$CRB versus $-\eta$ region for different schemes.}
		\label{region}
	\end{figure}

	Fig.~\ref{region} compares the $-$CRB versus $-\eta$ trade-off regions for different schemes. It can be observed that there exists a fundamental trade-off between CRB minimization and MSL minimization. Due to flexible antenna positioning, the MA-enabled sensing system can achieve a wide range of trade-offs between minimizing the CRB and MSL. In contrast, the FPA-based sensing system lacks such flexibility, and therefore cannot effectively adjust the MSL-CRB trade-off through antenna positioning.

	\begin{figure}[!t]
		\centering
		\includegraphics[width=75mm]{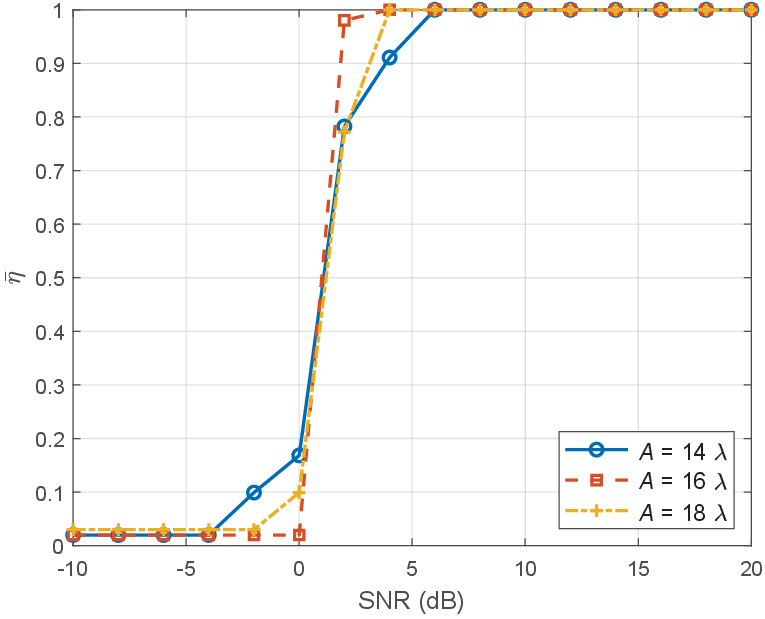}
		\caption{The optimal MSL threshold $\bar{\eta}$ versus SNR.}
		\label{eta}
	\end{figure}

	\begin{figure}[!t]
		\centering
		\includegraphics[width=75mm]{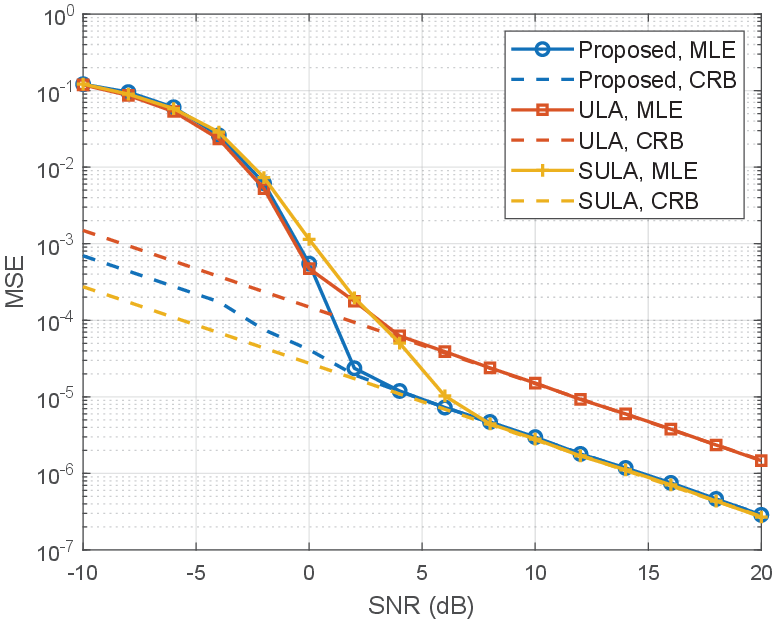}
		\caption{MSE versus SNR.}
		\label{MSE_opt}
	\end{figure}

	Fig.~\ref{eta} shows the optimal MSL threshold $\bar{\eta}$ versus the operating SNR for different line segment lengths $A \in \{14\lambda, 16\lambda, 18\lambda\}$. The SNR is defined as $PT|\beta|^2/\sigma^2$, and the number of candidate threshold values is set to $K_\eta=100$. It can be observed that, in the low-SNR regime, the optimal threshold $\bar{\eta}$ remains close to zero. This is because the AoA estimation MSE is primarily dominated by the ambiguity error induced by high MSL in the low-SNR regime. As the SNR increases, the probability of ambiguity error gradually decreases, allowing the system to relax the MSL constraint in favor of reducing the CRB. Consequently, in the high-SNR regime, larger values of $\bar{\eta}$ become preferable. Therefore, for a given operating SNR, the optimal threshold $\bar{\eta}$ can be determined from Fig.~\ref{eta}, based on which the corresponding APV can be obtained using Algorithm~\ref{alg:SCA}.
	
	Fig.~\ref{MSE_opt} compares the AoA estimation MSE, including the actual MSE achieved by MLE and the corresponding CRB, versus SNR for different schemes. We set $A=14\lambda$. For the proposed scheme, the optimal MSL threshold $\bar{\eta}$ is obtained according to the procedure described in Section~\ref{MSLoptsearch}. It can be observed that the MSE achieved by the MLE approaches the corresponding CRB in the high-SNR regime for all schemes. Furthermore, the proposed scheme adaptively balances the trade-off between MSL minimization and CRB minimization according to the operating SNR. Therefore, it achieves the lowest MSE across the entire SNR range.

	\section{Conclusions}
	In this paper, we presented a novel design approach for MA-enabled wireless sensing systems by jointly minimizing the CRB and the MSL of the ambiguity function via antenna position optimization. In particular, the MSE of AoA estimation was decomposed into a local estimation error within the mainlobe of the ambiguity function (i.e., CRB) and an additional ambiguity error caused by its sidelobes. Our analysis revealed a fundamental trade-off between CRB minimization and MSL minimization in the moderate-SNR regime. Specifically, minimizing the CRB prefers a narrower mainlobe, where antennas are concentrated near the two edges of the 1-D movement region; conversely, minimizing the MSL favors a wider mainlobe, where antennas are distributed more densely near the center of the movement region. To achieve robust sensing performance across different SNR regimes, we formulated an optimization problem to minimize the CRB subject to a prescribed MSL constraint via antenna position optimization. An efficient SCA algorithm was developed to optimize the APV, jointly with a 1-D linear search method proposed to determine the optimal MSL threshold that minimizes the actual MSE for any given SNR. Numerical results demonstrated that the proposed scheme effectively balances the trade-off between MSL and CRB minimization, thereby significantly reducing the AoA estimation MSE across the entire SNR range compared to conventional uniform and non-uniform FPA arrays.

	\appendix
	
	\subsection{Proof of Lemma 1}
	Let $\rho^\star(x)$ denote an optimal solution to problem (P2) that achieves the minimum MSL, denoted by ${\rm MSL}^\star$. Assume that $\rho^\star(x)$ is asymmetric w.r.t. the center of the 1-D line segment, i.e., there exists some $x\in[0,A]$ such that $\rho^\star(x) \neq \rho^\star(A - x)$. In the following, we show that there always exists a symmetric antenna density function whose MSL is no larger than ${\rm MSL}^\star$.
	
	To this end, we define the flipped antenna density function associated with $\rho^\star(x)$ as
	\begin{equation}
		\rho_{\rm flip}(x) \triangleq \rho^\star(A - x).
	\end{equation}	
	First, we verify the feasibility of $\rho_{\rm flip}(x)$, i.e., we show that $\rho_{\rm flip}(x)$ satisfies the constraints \eqref{P2b} and \eqref{P2c} of problem (P2). Since $\rho^\star(x)\geq 0$, we have $\rho_{\rm flip}(x) = \rho^\star(A - x)\geq 0$, $\forall x\in[0,A]$, and hence constraint \eqref{P2c} is satisfied. Next, to verify constraint \eqref{P2b}, let $t=A-x$, which yields $dx=-dt$. Then, we have
	\begin{equation}
		\int_{0}^{A} \rho_{\rm flip}(x) dx = \int_{A}^{0} \rho^\star(t) (-dt) = \int_{0}^{A} \rho^\star(t) dt = N.
	\end{equation}
	Therefore, $\rho_{\rm flip}(x)$ satisfies all constraints of problem (P2) and is thus a feasible solution.
	
	Next, we evaluate the ambiguity function associated with $\rho_{\rm flip}(x)$. The array factor corresponding to the continuous antenna density function $\rho(x)$ is defined as
	\begin{align}\label{arrayfactor2}
		\bar{F}(\rho(x), \Delta) = \frac{1}{N} \int_{0}^{A} \rho(x) e^{j\frac{2\pi}{\lambda}x\Delta} dx,
	\end{align}
	such that $\bar{G}(\rho(x), \Delta) = \left| \bar{F}(\rho(x), \Delta) \right|^2$. Substituting $\rho_{\rm flip}(x)$ into \eqref{arrayfactor2} and applying the change of variable $t=A-x$ yields
	\begin{align}
		\bar{F}(\rho_{\rm flip}(x), \Delta) &= \frac{1}{N} \int_{0}^{A} \rho_{\rm flip}(x) e^{j\frac{2\pi}{\lambda}x\Delta} dx \\
		&= \frac{1}{N} \int_{A}^{0} \rho^\star(t) e^{j\frac{2\pi}{\lambda}(A-t)\Delta} (-dt) \notag\\
		&= e^{j\frac{2\pi}{\lambda}A\Delta} \bar{F}^*(\rho^\star(x), \Delta). \notag
	\end{align}
	Since $\left|e^{j\frac{2\pi}{\lambda}A\Delta}\right|^2=1$, we have
	\begin{align}
		\bar{G}(\rho_{\rm flip}(x),\Delta)
		&=
		\left|
		\bar{F}(\rho_{\rm flip}(x),\Delta)
		\right|^2
		=
		\left|
		\bar{F}(\rho^\star(x),\Delta)
		\right|^2 \notag\\
		&=
		\bar{G}(\rho^\star(x),\Delta).
	\end{align}
	Therefore, $\rho_{\rm flip}(x)$ also achieves the global optimal MSL ${\rm MSL}^\star$.
	
	We now construct a new antenna density function $\rho_{\rm sym}(x)$ based on $\rho^\star(x)$ and $\rho_{\rm flip}(x)$ as
	\begin{equation}
		\rho_{\rm sym}(x) = \frac{1}{2} \rho^\star(x) + \frac{1}{2} \rho_{\rm flip}(x),
	\end{equation}
	which is symmetric w.r.t. $x = A/2$. Furthermore, since the feasible set defined by constraints \eqref{P2b} and \eqref{P2c} is convex, and $\rho_{\rm sym}(x)$ is a linear combination of the two feasible solutions $\rho^\star(x)$ and $\rho_{\rm flip}(x)$, it follows that $\rho_{\rm sym}(x)$ is also feasible for problem (P2). Then, the ambiguity function corresponding to $\rho_{\rm sym}(x)$ can be expressed as
	\begin{align}
		&\bar{G}(\rho_{\rm sym}(x), \Delta) = \left| \bar{F}(\rho_{\rm sym}(x), \Delta) \right|^2  \notag\\
		&=  \left| \frac{1}{2} F(\rho^\star, \Delta) + \frac{1}{2} F(\rho_{\rm flip}, \Delta) \right|^2 \notag \\
		&\le \left( \frac{1}{2} \left| F(\rho^\star, \Delta) \right| + \frac{1}{2} \left| F(\rho_{\rm flip}, \Delta) \right| \right)^2 \notag \\
		&= \left( \frac{1}{2} \sqrt{\bar{G}(\rho^\star, \Delta)} + \frac{1}{2} \sqrt{ \bar{G}(\rho_{\rm flip}, \Delta) } \right)^2,
	\end{align}
	where the inequality follows from the triangle inequality, i.e., $|a+b|\leq |a|+|b|$. Then, the MSL of $\bar{G}(\rho_{\rm sym}(x), \Delta)$ can be written as
	\begin{equation}
		\max_{\Delta \in \mathbb{D}_{\rm SL}} \bar{G}(\rho_{\rm sym}(x), \Delta) \le \left( \frac{1}{2}\sqrt{{\rm MSL}^\star} + \frac{1}{2}\sqrt{{\rm MSL}^\star} \right)^2 = {\rm MSL}^\star.
	\end{equation}
	Therefore, there always exists a symmetric antenna density function $\rho_{\rm sym}(x)$ whose MSL is no greater than ${\rm MSL}^\star$. This thus completes the proof of Lemma 1.

	\subsection{Proof of Theorem 1}
	To solve the MSL minimization problem (P2) w.r.t. the continuous antenna density $\rho(x)$, we first map the antenna's position $x\in[0,A]$ to a normalized origin-symmetric coordinate system $w\in[-1,1]$ through the transformation
	\begin{align}\label{xw}
		x=\frac{A}{2}(w+1),
	\end{align}
	which yields $dx=\frac{A}{2}dw$. Define the normalized spatial angle as $v = \frac{\pi A}{\lambda} \Delta$. Then, the array factor corresponding to the continuous antenna density function $\rho(x)$ can be rewritten as
	\begin{align}\label{eq_mapping}
		\bar{F}(\rho(x), \Delta) &= \frac{1}{N} \int_{-1}^{1} \rho\left(\frac{A}{2}(w+1)\right) e^{j\frac{\pi A}{\lambda}(w+1)\Delta} \frac{A}{2} dw \nonumber \\
		&= e^{jv} \int_{-1}^{1} \tilde{\rho}(w) e^{jvw} dw \triangleq e^{jv} \tilde{F}(\tilde{\rho}(w),v),
	\end{align}
	where
	\begin{align}
		\tilde{\rho}(w) \triangleq \frac{A}{2N} \rho\left(\frac{A}{2}(w+1)\right)
	\end{align}
	denotes the normalized antenna density function, and
	\begin{align}
		\tilde{F}(\tilde{\rho}(w),v) \triangleq \int_{-1}^{1} \tilde{\rho}(w) e^{jvw} dw
	\end{align}
	denotes the Fourier transform of $\tilde{\rho}(w)$. 
	Since $\left|e^{jv}\right|^2 = 1$, the ambiguity function can be expressed as
	\begin{equation}
		\bar{G}(\rho(x), \Delta) = \left| e^{jv} \tilde{F}(\tilde{\rho}(w),v) \right|^2 = \left| \tilde{F}(\tilde{\rho}(w),v) \right|^2.
	\end{equation}
	Therefore, minimizing the MSL of $\bar{G}(\rho(x),\Delta)$ is equivalent to designing $\tilde{F}(\tilde{\rho}(w),v)$ such that its maximum magnitude over the sidelobe region $|v|\geq b$ is minimized, where $b \triangleq \frac{\pi A}{\lambda}\Delta_{\rm ML}$ denotes the normalized mainlobe boundary of $\tilde{F}(\tilde{\rho}(w),v)$.
	
	Based on Lemma \ref{lemma_symmetry}, we consider that $\rho(x)$ is symmetric w.r.t. $x=A/2$. Then, $\tilde{\rho}(w) = \frac{A}{2N} \rho\left(\frac{A}{2}(w+1)\right)$ is symmetric w.r.t. $w=0$. Under this symmetry condition, the design of $\tilde{\rho}(w)$ is closely related to the design of a finite impulse response (FIR) filter. Specifically, consider a discrete zero-phase FIR filter with $2M+1$ symmetric tap coefficients satisfying $c_m=c_{-m}$ for $m=-M,-M+1,\ldots,M$. Its frequency response is given by
	\begin{equation}
		F_{\rm FIR}(\bm{c}, v) = \sum_{m=-M}^{M} c_m e^{j m \frac{v}{M}},
	\end{equation}
	where $\bm{c}\triangleq[c_{-M},\ldots,c_M]^{\mathsf T}\in \mathbb{R}^{(2M+1)\times1}$. By setting $c_m = \frac{1}{M} \tilde{\rho}\left(\frac{m}{M}\right)$, we obtain
	\begin{align}
		\lim_{M \to \infty} F_{\rm FIR}(\bm{c}, v) &= \lim_{M \to \infty} \sum_{m=-M}^{M} \tilde{\rho}\left(\frac{m}{M}\right) e^{j v \frac{m}{M}} \frac{1}{M} \notag \\
		&= \int_{-1}^{1} \tilde{\rho}(w) e^{jvw} dw = \tilde{F}(\tilde{\rho}(w),v).
	\end{align}
	Therefore, the continuous Fourier transform $\tilde{F}(\tilde{\rho}(w),v)$ can be interpreted as the asymptotic limit of the FIR filter response as $M\to\infty$. The optimal discrete filter response that minimizes the MSL over the sidelobe region $|v|\geq b$ is given by the Chebyshev polynomial of the first kind, i.e., \cite{harris1978use}
	\begin{align}\label{FChe}
		&F_{\rm Che}(M, v) = \\
		&\begin{cases} 
			C_0\cosh\left(M \mathrm{arccosh}\left( \frac{\cos(v/M)}{\cos(\tilde{b}_M/M)} \right)\right), & |v| < \tilde{b}_M, \\
			C_0\cos\left(M \arccos\left( \frac{\cos(v/M)}{\cos(\tilde{b}_M/M)} \right)\right), & |v| \ge \tilde{b}_M,
		\end{cases} \notag
	\end{align}
	where $\tilde{b}_M \triangleq M \arccos\left( \frac{\cos(b/M)}{\cos\left(\pi/(2M)\right)} \right)$ is within the mainlobe of $F_{\rm Che}(M, v)$ such that $F_{\rm Che}(M, \tilde{b}_M)$ equals to the MSL of $F_{\rm Che}(M, v)$. Moreover, $C_0$ denotes a scaling constant used to normalize the mainlobe peak $F_{\rm Che}(M, 0)$. As $M\to\infty$, the term $\frac{\cos(b/M)}{\cos\left(\pi/(2M)\right)}$ in  $\tilde{b}_M$ admits the asymptotic approximation as
	\begin{align}
		\frac{\cos(b/M)}{\cos\left(\pi/(2M)\right)} &\approx \frac{1 - \frac{b^2}{2M^2}}{1 - \frac{(\pi/2)^2}{2M^2}} \approx \left(1 - \frac{b^2}{2M^2}\right) \left(1 + \frac{(\pi/2)^2}{2M^2}\right) \notag\\
		&\approx  1 - \frac{b^2 - \pi^2/4}{2M^2},
	\end{align}
	where the first approximation follows from $\cos(x)\approx1-\frac{x^2}{2}$ as $x\to0$, and the second approximation follows from $\frac{1}{1-x}\approx1+x$ as $x\to0$. Moreover, using the asymptotic relation $\arccos(1-x) \approx \sqrt{2x}$ as $x \to 0_+$, we obtain
	\begin{align}\label{Minfity1}
		\arccos\left( \frac{\cos(b/M)}{\cos\left(\pi/(2M)\right)} \right) &\approx \sqrt{2 \left(\frac{b^2 - \pi^2/4}{2M^2}\right)}\notag\\
		&= \frac{\sqrt{b^2 - \pi^2/4 }}{M}.
	\end{align}
	Then, taking the limit as $M\to\infty$ yields
	\begin{align}
		\tilde{b} &\triangleq \lim_{M \to \infty} \tilde{b}_M = \sqrt{b^2 - \frac{\pi^2}{4}} = \sqrt{\frac{\pi^2 A^2}{\lambda^2} \Delta_{\rm ML}^2 - \frac{\pi^2}{4}}.
	\end{align}
	Similarly, the term $\frac{\cos(v/M)}{\cos(\tilde{b}_M/M)}$ admits the asymptotic approximation as
	\begin{align}
		\frac{\cos(v/M)}{\cos(\tilde{b}_M/M)} \approx  1 - \frac{v^2 - \tilde{b}^2}{2M^2}.
	\end{align}
	Moreover, using the asymptotic relation $\mathrm{arccosh}(1-x) \approx \sqrt{-2x}$ as $x \to 0_-$, we obtain
	\begin{align}\label{Minfity}
		&\mathrm{arccosh}\left( \frac{\cos(v/M)}{\cos(\tilde{b}_M/M)} \right) \approx \sqrt{-2 \left(\frac{v^2 - \tilde{b}^2}{2M^2}\right)} = \frac{\sqrt{\tilde{b}^2 - v^2}}{M}, \notag\\
		&\arccos\left( \frac{\cos(v/M)}{\cos(\tilde{b}_M/M)} \right) \approx \sqrt{2 \left(\frac{v^2 - \tilde{b}^2}{2M^2}\right)} = \frac{\sqrt{v^2 - \tilde{b}^2}}{M}.
	\end{align}
	Substituting \eqref{Minfity} into \eqref{FChe} and taking the limit as $M\to\infty$ yields
	\begin{align}
		\tilde{F}(\tilde{\rho}^\star(w),v) &= C_0 \lim_{M \to \infty} F_{\rm Che}(M, v) \notag\\ &= 
		\begin{cases}
			C_0 \cosh\left(\sqrt{\tilde{b}^2 - v^2}\right), & |v| < \tilde{b}, \\
			C_0 \cos\left(\sqrt{v^2 - \tilde{b}^2}\right), & |v| \ge \tilde{b}.
		\end{cases}
	\end{align}
	In the sidelobe region $|v|\geq b$, the function $\cos(\sqrt{v^2-\tilde{b}^2})$ oscillates uniformly between $-1$ and $1$, resulting in a constant sidelobe level equal to $C_0$. Furthermore, enforcing the normalization condition $\tilde{F}(\tilde{\rho}^\star(w),0)=1$ gives $C_0 \cosh(\tilde{b}) = 1$, and hence
	\begin{align}
		C_0 = \frac{1}{\cosh(\tilde{b})}.
	\end{align}
	
	Then, since $v=\frac{\pi A}{\lambda}\Delta$ and $b=\frac{\pi A}{\lambda}\Delta_{\rm ML}$, the array factor corresponding to the optimal continuous antenna density function $\rho^\star(x)$ can be rewritten as
	\begin{align}
		&\bar{F}(\rho^\star(x), \Delta)  \\
		&= \begin{cases}
			\frac{1}{\cosh(\tilde{b})} \cosh\left(\frac{\pi A}{\lambda}\sqrt{\tilde{\Delta}_{\rm ML}^2 - \Delta^2}\right), & |\Delta| < \tilde{\Delta}_{\rm ML}, \\
			\frac{1}{\cosh(\tilde{b})} \cos\left(\frac{\pi A}{\lambda}\sqrt{\Delta^2 - \tilde{\Delta}_{\rm ML}^2}\right), & |\Delta| \ge \tilde{\Delta}_{\rm ML}, \notag
		\end{cases}
	\end{align}
	where $\tilde{\Delta}_{\rm ML}\triangleq \frac{\lambda\tilde{b}}{\pi A}$. Since $\bar{G}(\rho(x), \Delta) = \left| \bar{F}(\rho(x), \Delta) \right|^2$, the corresponding minimum MSL is given by
	\begin{align}
		\max_{\Delta \in \mathbb{D}_{\rm SL}}  \bar{G}(\rho^\star(x), \Delta) = C_0^2 = \frac{1}{\cosh^2(\tilde{b})}.
	\end{align}
	
	Next, we obtain the optimal normalized antenna density function $\tilde{\rho}^\star(w)$ based on $\tilde{F}(\tilde{\rho}^\star(w),v)$. Since $\cos(jx) = \frac{e^{j(jx)} + e^{-j(jx)}}{2} = \frac{e^{-x} + e^{x}}{2} = \cosh(x)$ and $\sqrt{v^2 - \tilde{b}^2} = j\sqrt{\tilde{b}^2 - v^2}$ when $|v| < \tilde{b}$, the piecewise expression $\tilde{F}(\tilde{\rho}^\star(w),v)$ can be compactly rewritten as
	\begin{align}
		\tilde{F}(\tilde{\rho}^\star(w),v) = \frac{1}{\cosh(\tilde{b})} \cos\left(\sqrt{v^2 - \tilde{b}^2}\right).
	\end{align}
	Then, $\tilde{\rho}^\star(w)$ can be obtained by evaluating the inverse Fourier transform of $\tilde{F}(\tilde{\rho}^\star(w),v)$ as
	\begin{align}
		&\tilde{\rho}^\star(w) = \frac{1}{2\pi} \int_{-\infty}^{\infty} \tilde{F}(\tilde{\rho}^\star(w),v) e^{-j v w} dv \\
		&= \frac{1}{2\pi \cosh(\tilde{b})} \int_{-\infty}^{\infty} \cos(v) e^{-j v w} dv \notag\\
		&~~ + \frac{1}{2\pi \cosh(\tilde{b})} \int_{-\infty}^{\infty} \left( \cos\left(\sqrt{v^2 - \tilde{b}^2}\right) - \cos(v) \right) e^{-j v w} dv \notag\\
		&= \frac{1}{2\cosh(\tilde{b})} \left( \delta(w-1) + \delta(w+1) + \tilde{b} \frac{I_1(\tilde{b}\sqrt{1-w^2})}{\sqrt{1-w^2}} \right), \notag
	\end{align}
	where the last equality follows from the Fourier transform identities $\frac{1}{2\pi} \int_{-\infty}^{\infty} \cos(v) e^{-j v w} dv = \frac{1}{2}\left(\delta(w-1) + \delta(w+1)\right)$ and $\frac{1}{2\pi} \int_{-\infty}^{\infty} \left( \cos\left(\sqrt{v^2 - \tilde{b}^2}\right) - \cos(v) \right) e^{-j v w} dv = \frac{\tilde{b}}{2} \frac{I_1(\tilde{b}\sqrt{1-w^2})}{\sqrt{1-w^2}}$, where $I_1(\cdot)$ denotes the modified Bessel function of the first kind of order one, defined as $I_1(z) \triangleq \sum_{m=0}^{\infty} \frac{1}{m!(m+1)!}\left(\frac{z}{2}\right)^{2m+1}$.

	Finally, we transform the optimal normalized antenna density function $\tilde{\rho}^\star(w)$ back to the original antenna density function $\rho^\star(x)$. According to \eqref{xw}, we have $w = \frac{2x}{A} - 1$ and $\rho^\star(x) = \frac{2N}{A} \tilde{\rho}^\star\left(\frac{2x}{A} - 1\right)$. Then, $\rho^\star(x)$ can be written as
	\begin{align}
		&\rho^\star(x) = \frac{2N}{A} \tilde{\rho}^\star\left(\frac{2x}{A} - 1\right) \\
		&= \frac{2N}{A} \frac{1}{2\cosh(\tilde{b})} \Bigg( \delta\left(\frac{2x}{A} - 2\right) + \delta\left(\frac{2x}{A}\right) \notag\\
		&~~~~ + \tilde{b} \frac{I_1\left(\tilde{b}\sqrt{1 - \left(\frac{2x}{A} - 1\right)^2}\right)}{\sqrt{1 - \left(\frac{2x}{A} - 1\right)^2}} \Bigg) \notag\\
		&= \frac{N}{2\cosh(\tilde{b})} \left( \delta(x) + \delta(x-A) + \frac{\tilde{b} I_1\left(\frac{2\tilde{b}\sqrt{Ax - x^2}}{A}\right)}{\sqrt{Ax - x^2}} \right), \notag
	\end{align}
	where the third equality follows from the scaling property of the impulse function, i.e., $\delta(ax - b) = \frac{1}{|a|} \delta\left(x - \frac{b}{a}\right)$. This thus completes the proof of Theorem 1.

	\bibliographystyle{IEEEtran}
	\bibliography{IEEEabrv,IEEEexample}

\begin{thebibliography}{10}
\providecommand{\url}[1]{#1}
\csname url@samestyle\endcsname
\providecommand{\newblock}{\relax}
\providecommand{\bibinfo}[2]{#2}
\providecommand{\BIBentrySTDinterwordspacing}{\spaceskip=0pt\relax}
\providecommand{\BIBentryALTinterwordstretchfactor}{4}
\providecommand{\BIBentryALTinterwordspacing}{\spaceskip=\fontdimen2\font plus
\BIBentryALTinterwordstretchfactor\fontdimen3\font minus
  \fontdimen4\font\relax}
\providecommand{\BIBforeignlanguage}[2]{{%
\expandafter\ifx\csname l@#1\endcsname\relax
\typeout{** WARNING: IEEEtran.bst: No hyphenation pattern has been}%
\typeout{** loaded for the language `#1'. Using the pattern for}%
\typeout{** the default language instead.}%
\else
\language=\csname l@#1\endcsname
\fi
#2}}
\providecommand{\BIBdecl}{\relax}
\BIBdecl

\bibitem{jiang2021the}
W.~Jiang, B.~Han, M.~A. Habibi, and H.~D. Schotten, ``{The road towards 6G: A
  comprehensive survey},'' \emph{{IEEE} Open J. Commun. Soc.}, vol.~2, pp.
  334--366, Feb. 2021.

\bibitem{mailloux2005phased}
R.~J. Mailloux, \emph{{Phased Array Antenna Handbook}}.\hskip 1em plus 0.5em
  minus 0.4em\relax 2nd ed. Norwood, MA, USA: Artech House, 2005.

\bibitem{roberts2011sparse}
W.~Roberts, L.~Xu, J.~Li, and P.~Stoica, ``{Sparse antenna array design for
  MIMO active sensing applications},'' \emph{{IEEE} Trans. Antennas Propagat.},
  vol.~59, no.~3, pp. 846--858, Mar. 2011.

\bibitem{zhu2023MAMag}
L.~Zhu, W.~Ma, and R.~Zhang, ``Movable antennas for wireless communication:
  Opportunities and challenges,'' \emph{IEEE Commun. Mag.}, vol.~62, no.~6, pp.
  114--120, Jun. 2024.

\bibitem{zhu2025tutorial}
L.~Zhu, W.~Ma, W.~Mei, Y.~Zeng, Q.~Wu, B.~Ning, Z.~Xiao, X.~Shao, J.~Zhang, and
  R.~Zhang, ``A tutorial on movable antennas for wireless networks,''
  \emph{{IEEE} Commun. Surveys Tuts.}, vol.~28, pp. 3002--3054, Feb. 2025.

\bibitem{zhao2009single}
S.~Zhao, H.~Yang, and H.~Yang, ``Single antenna spatial diversity,'' in
  \emph{Proc. IEEE Int. Conf. Wireless Commun., Netw., Mobile Comput. (WiCOM)},
  Beijing, China, Sep. 2009, pp. 1--4.

\bibitem{wu2025fluid}
T.~Wu, K.~Zhi, J.~Yao, X.~Lai, J.~Zheng, H.~Niu, M.~Elkashlan, K.-K. Wong,
  C.-B. Chae, Z.~Ding, G.~K. Karagiannidis, M.~Debbah, and C.~Yuen, ``Fluid
  antenna systems enabling {6G}: Principles, applications, and research
  directions,'' \emph{IEEE Wireless Commun.}, 2025, early access, DOI:
  10.1109/MWC.2025.3629597.

\bibitem{zhu2022MAmodel}
{L. Zhu, W. Ma, and R. Zhang}, ``Modeling and performance analysis for movable
  antenna enabled wireless communications,'' \emph{IEEE Trans. Wireless
  Commun.}, vol.~23, no.~6, pp. 6234--6250, Jun. 2024.

\bibitem{mei2024movable}
W.~Mei, X.~Wei, B.~Ning, Z.~Chen, and R.~Zhang, ``Movable-antenna position
  optimization: A graph-based approach,'' \emph{IEEE Wireless Commun. Lett.},
  vol.~13, no.~7, pp. 1853--1857, Jul. 2024.

\bibitem{ning2024movable}
B.~Ning, S.~Yang, Y.~Wu, P.~Wang, W.~Mei, C.~Yuen, and E.~Bj{\"o}rnson,
  ``Movable antenna-enhanced wireless communications: General architectures and
  implementation methods,'' \emph{IEEE Wireless Commun.}, vol.~32, no.~5, pp.
  108--116, Oct. 2025.

\bibitem{tang2024secure}
J.~Tang, C.~Pan, Y.~Zhang, H.~Ren, and K.~Wang, ``Secure {MIMO} communication
  relying on movable antennas,'' \emph{IEEE Trans. Commun.}, vol.~73, no.~4,
  pp. 2159--2175, Apr. 2025.

\bibitem{zhu2023MAmultiuser}
L.~Zhu, W.~Ma, B.~Ning, and R.~Zhang, ``Movable-antenna enhanced multiuser
  communication via antenna position optimization,'' \emph{IEEE Trans. Wireless
  Commun.}, vol.~23, no.~7, pp. 7214--7229, Jul. 2024.

\bibitem{wu2023movable}
Y.~Wu, D.~Xu, D.~W.~K. Ng, W.~Gerstacker, and R.~Schober, ``Movable
  antenna-enhanced multiuser communication: Optimal discrete antenna
  positioning and beamforming,'' in \emph{Proc. IEEE Global Commun. Conf.
  (Globecom)}, Kuala Lumpur, Malaysia, Dec. 2023, pp. 7508--7513.

\bibitem{qin2024antenna}
H.~Qin, W.~Chen, Z.~Li, Q.~Wu, N.~Cheng, and F.~Chen, ``Antenna positioning and
  beamforming design for fluid antenna-assisted multi-user downlink
  communications,'' \emph{IEEE Wireless Commun. Lett.}, vol.~13, no.~4, pp.
  1073--1077, Apr. 2024.

\bibitem{cheng2023sum}
Z.~Cheng, N.~Li, J.~Zhu, X.~She, C.~Ouyang, and P.~Chen, ``Sum-rate
  maximization for movable antenna enabled multiuser communications,''
  \emph{arXiv preprint arXiv:2309.11135}, 2023.

\bibitem{yang2024flexible}
S.~Yang, J.~An, Y.~Xiu, W.~Lyu, B.~Ning, Z.~Zhang, M.~Debbah, and C.~Yuen,
  ``Flexible antenna arrays for wireless communications: Modeling and
  performance evaluation,'' \emph{IEEE Trans. Wireless Commun.}, vol.~24,
  no.~6, pp. 4937--4951, Jun. 2025.

\bibitem{hu2024power}
G.~Hu, Q.~Wu, K.~Xu, J.~Ouyang, J.~Si, Y.~Cai, and N.~Al-Dhahir, ``Fluid
  antennas-enabled multiuser uplink: A low-complexity gradient descent for
  total transmit power minimization,'' \emph{IEEE Commun. Lett.}, vol.~28,
  no.~3, pp. 602--606, Mar. 2025.

\bibitem{li2024minimizing}
Q.~Li, W.~Mei, B.~Ning, and R.~Zhang, ``Minimizing movement delay for movable
  antennas via trajectory optimization,'' in \emph{Proc. IEEE Global Commun.
  Conf. (Globecom)}, Cape Town, South Africa, Dec. 2024, pp. 1--6.

\bibitem{ma2022MAmimo}
W.~Ma, L.~Zhu, and R.~Zhang, ``{MIMO} capacity characterization for movable
  antenna systems,'' \emph{IEEE Trans. Wireless Commun.}, vol.~23, no.~4, pp.
  3392--3407, Apr. 2024.

\bibitem{chen2023joint}
X.~Chen, B.~Feng, Y.~Wu, D.~W.~K. Ng, and R.~Schober, ``Joint beamforming and
  antenna movement design for moveable antenna systems based on statistical
  {CSI},'' in \emph{Proc. IEEE Global Commun. Conf. (Globecom)}, Kuala Lumpur,
  Malaysia, Dec. 2023, pp. 4387--4392.

\bibitem{yeyuqi2023fluid}
Y.~Ye, L.~You, J.~Wang, H.~Xu, K.-K. Wong, and X.~Gao, ``{Fluid
  antenna-assisted MIMO transmission exploiting statistical CSI},'' \emph{IEEE
  Commun. Lett.}, vol.~28, no.~1, pp. 223--227, Jan. 2024.

\bibitem{ma2023MAestimation}
W.~Ma, L.~Zhu, and R.~Zhang, ``Compressed sensing based channel estimation for
  movable antenna communications,'' \emph{IEEE Commun. Lett.}, vol.~27, no.~10,
  pp. 2747--2751, Oct. 2023.

\bibitem{xiao2023channel}
Z.~Xiao, S.~Cao, L.~Zhu, Y.~Liu, B.~Ning, X.-G. Xia, and R.~Zhang, ``Channel
  estimation for movable antenna communication systems: A framework based on
  compressed sensing,'' \emph{IEEE Trans. Wireless Commun.}, vol.~23, no.~9,
  pp. 11\,814--11\,830, Sep. 2024.

\bibitem{zhu2023MAarray}
L.~Zhu, W.~Ma, and R.~Zhang, ``Movable-antenna array enhanced beamforming:
  Achieving full array gain with null steering,'' \emph{IEEE Commun. Lett.},
  vol.~27, no.~12, pp. 3340--3344, Dec. 2023.

\bibitem{ma2024multi}
W.~Ma, L.~Zhu, and R.~Zhang, ``Multi-beam forming with movable-antenna array,''
  \emph{IEEE Commun. Lett.}, vol.~28, no.~3, pp. 697--701, Mar. 2024.

\bibitem{ZhuLP_satellite_MA}
L.~Zhu, X.~Pi, W.~Ma, Z.~Xiao, and R.~Zhang, ``Dynamic beam coverage for
  satellite communications aided by movable-antenna array,'' \emph{IEEE Trans.
  Wireless Commun.}, vol.~24, no.~3, pp. 1916--1933, Mar. 2025.

\bibitem{zhu2024nearfield}
{L. Zhu, W. Ma, Z. Xiao, and R. Zhang}, ``Movable antenna enabled near-field
  communications: Channel modeling and performance optimization,'' \emph{IEEE
  Trans. Commun.}, vol.~73, no.~9, pp. 7240--7256, Sep. 2025.

\bibitem{shao20246DMA}
X.~Shao, Q.~Jiang, and R.~Zhang, ``{6D} movable antenna based on user
  distribution: Modeling and optimization,'' \emph{IEEE Trans. Wireless
  Commun.}, vol.~24, no.~1, pp. 355--370, Jan. 2025.

\bibitem{shao2024Mag6DMA}
X.~Shao and R.~Zhang, ``{6DMA} enhanced wireless network with flexible antenna
  position and rotation: Opportunities and challenges,'' \emph{IEEE Commun.
  Mag.}, vol.~63, no.~4, pp. 121--128, Apr. 2025.

\bibitem{shao2024exploiting}
X.~Shao, R.~Zhang, and R.~Schober, ``Exploiting six-dimensional movable antenna
  for wireless sensing,'' \emph{IEEE Wireless Commun. Lett.}, vol.~14, no.~2,
  pp. 265--269, Feb. 2025.

\bibitem{shao2024distributed}
X.~Shao, R.~Zhang, Q.~Jiang, J.~Park, T.~Q.~S. Quek, and R.~Schober,
  ``Distributed channel estimation and optimization for {6D} movable antenna:
  Unveiling directional sparsity,'' \emph{IEEE J. Select. Topics Signal
  Processing}, vol.~19, no.~2, pp. 349--365, Mar. 2025.

\bibitem{zheng2026rotatable}
B.~Zheng, Q.~Wu, T.~Ma, and R.~Zhang, ``Rotatable antenna enabled wireless
  communication: Modeling and optimization,'' \emph{IEEE Trans. Commun.},
  vol.~74, pp. 6825--6842, Mar. 2026.

\bibitem{liu2025pinching}
Y.~Liu, Z.~Wang, X.~Mu, C.~Ouyang, X.~Xu, and Z.~Ding, ``Pinching-antenna
  systems: Architecture designs, opportunities, and outlook,'' \emph{IEEE
  Commun. Mag.}, vol.~64, no.~1, pp. 190--196, Jan. 2026.

\bibitem{zhuravlev2015experi}
A.~Zhuravlev, V.~Razevig, S.~Ivashov, A.~Bugaev, and M.~Chizh, ``Experimental
  simulation of multi-static radar with a pair of separated movable antennas,''
  in \emph{Proc. IEEE International Conf. Microwaves Commun. Antennas Electron.
  Syst. (COMCAS)}, Nov. 2015, pp. 1--5.

\bibitem{hinske2008using}
S.~Hinske and M.~Langheinrich, ``Using a movable {RFID} antenna to
  automatically determine the position and orientation of objects on a
  tabletop,'' in \emph{Smart Sensing Context, 3rd Eur. Conf.}\hskip 1em plus
  0.5em minus 0.4em\relax Springer, 2008, pp. 14--26.

\bibitem{ma2024MAsensing}
W.~Ma, L.~Zhu, and R.~Zhang, ``Movable antenna enhanced wireless sensing via
  antenna position optimization,'' \emph{IEEE Trans. Wireless Commun.},
  vol.~23, no.~11, pp. 16\,575--16\,589, Nov. 2024.

\bibitem{chen2025MAISACopt}
L.~Chen, M.-M. Zhao, M.-J. Zhao, and R.~Zhang, ``Antenna position and
  beamforming optimization for movable antenna enabled {ISAC}: Optimal
  solutions and efficient algorithms,'' \emph{{IEEE} Trans. Signal Processing},
  vol.~73, pp. 3812--3828, Jul. 2025.

\bibitem{wang2025MAnearsensing}
Y.~Wang, W.~Mei, X.~Wei, B.~Ning, and Z.~Chen, ``Antenna position optimization
  for movable antenna-empowered near-field sensing,'' in \emph{Proc. IEEE Int.
  Conf. Commun. Workshops (ICC Workshops)}, Montreal, QC, Canada, Jun. 2025,
  pp. 324--329.

\bibitem{mao2025movable}
H.~Mao, L.~Zhu, W.~Ma, Z.~Xiao, X.-G. Xia, and R.~Zhang, ``Movable-antenna
  array enhanced multi-target sensing: {CRB} characterization and
  optimization,'' \emph{arXiv preprint arXiv:2511.18907}, 2025.

\bibitem{liu2026optimal}
J.~Liu, X.~Song, and X.~Yu, ``Optimal movable antenna placement for near-field
  wireless sensing,'' \emph{arXiv preprint arXiv:2603.10383}, 2026.

\bibitem{ma2025movabletra}
W.~Ma, L.~Zhu, and R.~Zhang, ``Movable-antenna trajectory optimization for
  wireless sensing: {CRB} scaling laws over time and space,'' \emph{arXiv
  preprint arXiv:2509.14905}, 2025.

\bibitem{ma20263D}
W.~Ma, L.~Zhu, X.~Shao, and R.~Zhang, ``{3-D Trajectory Optimization for Robust
  Direction Sensing in Movable Antenna Systems},'' \emph{arXiv preprint
  arXiv:2603.10426}, 2026.

\bibitem{ma2025MAISAC}
W.~Ma, L.~Zhu, and R.~Zhang, ``Movable antenna enhanced integrated sensing and
  communication via antenna position optimization,'' \emph{{IEEE} Trans. Signal
  Processing}, vol.~74, pp. 1522--1537, Mar. 2026.

\bibitem{athley2005threshold}
F.~Athley, ``Threshold region performance of maximum likelihood direction of
  arrival estimators,'' \emph{IEEE Trans. Signal Process.}, vol.~53, no.~4, pp.
  1359--1373, Apr. 2005.

\bibitem{lirenwang2024irs}
R.~Li, X.~Shao, S.~Sun, M.~Tao, and R.~Zhang, ``{IRS} aided millimeter-wave
  sensing and communication: Beam scanning, beam splitting, and performance
  analysis,'' \emph{IEEE Trans. Wireless Commun.}, vol.~23, no.~12, pp.
  19\,713--19\,727, Dec. 2024.

\bibitem{chen2025ambiguity}
X.~Chen, M.-M. Zhao, M.~Li, L.~Li, M.-J. Zhao, and J.~Wang, ``Ambiguity
  function analysis and optimization of frequency-hopping {MIMO} radar with
  movable antennas,'' \emph{IEEE Internet Things J.}, vol.~12, no.~12, pp.
  21\,836--21\,851, Jun. 2025.

\bibitem{harris1978use}
F.~J. Harris, ``On the use of windows for harmonic analysis with the discrete
  fourier transform,'' \emph{Proc. IEEE}, vol.~66, no.~1, pp. 51--83, Jan.
  1978.

\end{thebibliography}
	
\end{document}